\newcommand{\dowod}{\begin{proof}}
\newcommand{\koniec}{\end{proof}}
\newtheorem{theorem}{Theorem}
\newtheorem{lemma}{Lemma}
\newtheorem{fact}{Fact}
\newtheorem{observation}{Observation}
\newtheorem{definition}{Definition}
\newtheorem{claim}{Claim}
\newtheorem{corollary}{Corollary}
\newcommand{\np}{$\mathcal{NP}$}
\title{Triangle-free $2$-matchings \thanks{Partially supported by Polish National Science Center grant 2018/29/B/ST6/02633.}}
\author{Katarzyna Paluch\\
Institute of Computer Science,  University of Wroc{\l}aw \\
{\tt abraka@cs.uni.wroc.pl}}
\begin{document}
\maketitle

\begin{abstract}

We consider the problem of finding a maximum size triangle-free $2$-matching in a graph $G=(V,E)$. A (simple) $2$-matching is any subset of the edges such that each vertex is incident to at most two  edges from the subset. The first polynomial time algorithm for this problem was given by Hartvigsen in 1984 in his PhD thesis and its improved version has been recently published in a journal.   We present a different, significantly simpler algorithm with a relatively short proof of correctness. Our algorithm with running time $O(|V||E|)$  is additionally faster than the one by Hartvigsen having running time  $O(|V|^3|E|^2)$.

It has been proven before that for any triangle-free $2$-matching $M$ which is not maximum the graph contains an $M$-augmenting path, whose application to $M$ results in a bigger triangle-free $2$-matching.   A new observation is that the search for an augmenting path $P$ can be restricted to so-called {\em amenable} paths that go through any triangle $t$ contained in $P\cup M$ a limited number of times. Amenable paths can be  characterised with the aid of {\em half-edges}. A {\em half-edge} of edge $e$ is, informally speaking, a half  of $e$ containing exactly one of its endpoints. Each half-edge serves also as a {\em hinge} - a connector between one pair of edges on an alternating path. To find an amenable augmenting path we thus dynamically
remove and re-add  half-edges to forbid or allow some edges to be followed by certain others. This  can also be interpreted as using gadgets, which are not  fixed during any augmentation phase, but are  changing according to the currently discovered state of reachability by amenable paths. 

The existence of amenable augmenting paths follows from our decomposition theorem for  triangle-free $2$-matchings.
This decomposition theorem is  largely the same as the decomposition from  versions 1-6 of this paper and  is moreover simpler and stronger than the one given by Kobayashi and 
Noguchi in \cite{NoguchiSosa}.

\end{abstract}

\section{Introduction}

A subset~$M$ of edges of an undirected simple graph $G=(V,E)$ is a (simple) \emph{2-matching} if every vertex is incident to at most two  edges of~$M$. For brevity, a simple $2$-matching will further be called a $2$-matching. A $2$-matching of maximum size can be computed in polynomial time by a reduction to the classical matching problem. A $2$-matching is called \emph{triangle-free} if it does not contain any cycle of length three. The first polynomial time algorithm for computing a maximum size triangle-free $2$-matching was given by Hartvigsen \cite{Hartvigsen1984} in his PhD thesis in 1984. This algorithm and its analysis are very complicated. 
A journal version with the modified algorithm from the thesis appeared recently in \cite{Hartvigsen2024}. It is still very long ($82$ pages) and complicated and the algorithm given there has $O(|V|^3|E|^2)$ running time.  We present an alternative and significantly simpler algorithm with a relatively short proof of correctness and $O(|V||E|)$ running time.

 More generally,  a \emph{$C_k$-free} $2$-matching is one without any cycle of length at most ~$k$. We refer to cycles of length three as \emph{triangles}.
The goal of the \emph{$C_k$-free $2$-matching problem} is to find a $C_k$-free $2$-matching of maximum size. Observe that the $C_k$-free $2$-matching problem for $n/2 \leq k < n$, where $n$ is the number of vertices in the graph, is equivalent to finding a Hamiltonian cycle, and thus \np-hard.
The case of  $k=3$ is also called the \emph{triangle-free $2$-matching problem}. For $k\geq 5$ Papadimitriou \cite{CornuejolsPulleyblank1980} showed that the problem is \np-hard. The complexity of the $C_4$-free $2$-matching problem is unknown.

In the weighted version of the problem,  each edge $e$ is associated with a  nonnegative weight $w(e)$ and we are interested in finding a $C_k$-free $2$-matching of maximum  weight, where the weight of a $2$-matching $M$ is defined as the sum of weights of edges belonging to $M$.  
Vornberger~\cite{Vornberger1980} showed that the weighted $C_4$-free $2$-matching problem is \np-hard. 

{\em Motivation. } Matching problems are among the most well-studied and key problems of combinatorial optimization. The triangle-free $2$-matching problem  is a classical problem of matching theory, better understanding of which has bearing on other problems and may enable further extensions. 
Additionally,  it has applications in heavily researched problems in theoretical computer science, namely,  traveling salesman problems (\cite{BlaserRam2005}, \cite{BermanKarpinski2006}, \cite{AdamaszekMP18}) and  problems related to finding a smallest $2$-edge-connected spanning subgraph \cite{Kobayashi2023},  \cite{Calvostoc}.
In fact, the computation of a maximum triangle-free $2$-matching (needed for a minimum triangle-free $2$-edge cover) is a crucial part of approximation algorithms for the $2$-edge-connected spanning problem in \cite{Kobayashi2023} and \cite{Calvostoc}. It is somewhat strange but this information is completely omitted in the conference version of \cite{Calvostoc} and in the ArXiv version mentioned only in the appendix.
More generally, the $C_k$-free $2$-matching problems can also be used for  increasing the vertex-connectivity (see \cite{FisherEtAl1979,BercziKobayashi2012,BercziVegh2010} for more details). A good survey of these applications has been given by Takazawa~\cite{Takazawa2017a}.

{\em Our results. } When computing a maximum triangle-free $2$-matching one can successively and exhaustively augment any triangle-free $2$-matching $M$. Russell \cite{Russell2001} proved that 
for a non-maximum triangle-free $2$-matching $M$ there exists in the graph  an augmenting path $P$ such that after its application to $M$ we obtain a new larger 
$2$-matching, which is triangle-free.  However, it was not known how to efficiently find such a path. We show that the search for an augmenting path that preserves the property of being triangle-free can be restricted to so called {\em amenable} paths.
It turns out that assuming that $M$ is a non-maximum triangle-free $2$-matching, there always exists an augmenting path $P$ that passes through any triangle $t \in P \cup M$ at most $|t \setminus M|$ times, i.e., if $t$ contains two edges in $M$, then $P$ can go through $t$ only once - an edge $e \in t \setminus M$ is immediately followed or preceded on $P$ by an edge in $t \cap M$. Amenable paths can be  characterised with the aid of {\em half-edges}.

A {\em half-edge} of edge $e$ is, informally speaking, a half  of $e$ containing exactly one of its endpoints. 
In the current context a half-edge may be also viewed as a {\em hinge} - a connecter between one pair of edges on an alternating path. Half-edges  have also been used in the Tutte reduction (\cite{Schrijver2003} p.526) of simple $2$-matchings to matchings (and also of simple $b$-matchings, which are further generalizations of matchings, to matchings). The main difference between their application in Tutte's reductions and ours is that in Tutte's reductions either both half-edges of a given edge are used or none, whereas we consider more  unconventional configurations and in particular, we admit using only one half-edge of an edge.
Half-edges of the new type have already been  introduced in \cite{PaluchEtAl2012} and used in several subsequent papers. Their application in the present paper to disconnecting pairs of edges is novel.

To find an amenable augmenting path we dynamically
remove and re-add  half-edges to forbid or allow some edges to be followed by certain others. 
In algorithms for maximum $C_k$-free $2$-matchings and related problems one usually uses gadgets or shrinking to enforce the computed matchings to have some properties. The gadgets are added 
to the graph and are fixed either throughout the algorithm or at least during the phase, in which an augmenting path is searched for. In our approach gadgets are dynamically changing according to the currently discovered state of reachability by amenable paths. 
The existence of amenable augmenting paths follows from our decomposition theorem for  triangle-free $2$-matchings.
This decomposition theorem is  in essence the same as the decomposition from  the  arxiv versions of this paper and  is moreover simpler and stronger than the one  in \cite{NoguchiSosa}.

 A good knowledge of the Edmonds' algorithm for maximum matchings    \cite{Edmonds,  LovaszPlummer2009} is a prerequisite for understanding our proofs.

{\em Related work. } Recently, Bosch-Calvo, Grandoni and Jabal Ameli presented a PTAS for the $C_3$-free $2$-matching problem in \cite{Bosch}. A simpler proof of correctness of this PTAS is given in \cite{NoguchiSosa}.  Kobayashi~\cite{Kobayashi2020} gave a polynomial algorithm for finding a maximum weight $2$-matching that does not contain any triangle from a given set of forbidden edge-disjoint triangles. For the weighted triangle-free $2$-matching problem in subcubic graphs, polynomial time algorithms were given by:
 Hartvigsen and Li~\cite{HartvigsenLi2012},  Kobayashi~\cite{Kobayashi2010} and Paluch and Wasylkiewicz \cite{PaluchWasylkiewicz2020}. (A graph is called {\em cubic} if its every vertex has degree $3$ and is   called {\em subcubic} if its every vertex has degree at most $3$. ) One can also consider non-simple $2$-matchings, in which every edge $e$ may occur in more than one copy. 
Efficient algorithms for triangle-free non-simple $2$-matchings (such $2$-matchings may contain $2$-cycles) were devised by Cornu\'ejols and Pulleyback \cite{CornuejolsPulleyblank1980, CornuejolsPulleyblank1980a}, Babenko, Gusakov and Razenshteyn \cite{BabenkoEtAl2010}, and Artamonov and Babenko \cite{ArtamonovBabenko2018}.

 Polynomial time algorithms for  the $C_4$-free $2$-matching problem in bipartite graphs were shown by Hartvigsen~\cite{Hartvigsen2006}, Pap~\cite{Pap2007}, Paluch and Wasylkiewicz \cite{PaluchWasylESA} and analyzed by Kir\'aly \cite{Kiraly1999}. As for the weighted version of the square-free $2$-matching problem in bipartite graphs it was proven to be \np-hard \cite{Geelen1999, Kiraly2009} and solved by  Makai~\cite{Makai2007}, Takazawa~\cite{Takazawa2009} and Paluch and Wasylkiewicz \cite{PaluchWasylESA} for the case  when  the weights of edges are vertex-induced on every square of the graph. When it comes to the square-free $2$-matching problem in general graphs, 
Nam~\cite{Nam1994} constructed a complex algorithm for it for graphs, in which all squares are vertex-disjoint.
 B\'erczi and Kobayashi \cite{BercziKobayashi2012} showed that the weighted square-free $2$-matching problem 
is \np-hard for general weights even if the given graph is cubic, bipartite and planar.

Other results for related matching problems appeared  in \cite{Pap2009,Takazawa2017,Takazawa2017a}.


\section{Preliminaries}
Let $G=(V,E)$ be an undirected graph with vertex set~$V$ and edge set~$E$. We denote the number of vertices of $G$ by $n$ and the number of edges of $G$ by $m$. We denote a vertex set of $G$ by $V(G)$ and an edge set by $E(G)$. We assume that all graphs are {\bf \em simple}, i.e., they contain neither loops nor parallel edges. We denote an edge connecting vertices $v$ and $u$ by $(v,u)$. A {\bf \em path} of graph~$G$ is a sequence $P=(v_0, \ldots, v_l)$ for some $l\geq 1$ such that $(v_i,v_{i+1})\in E$ for every $i\in\{0,1,\ldots,l-1\}$. We refer to $l$ as the {\bf \em length} of $P$. A {\bf \em cycle} of graph~$G$ is a sequence  $c=(v_0, \ldots, v_{l-1})$ for some $l \geq 3$ of pairwise distinct vertices of~$G$ such that $(v_i,v_{(i+1) \bmod l})\in E$ for every $i\in\{0,1,\ldots,l-1\}$. We refer to $l$ as the {\bf \em length} of $c$. We will sometimes treat a path or a cycle as an edge set and sometimes as a sequence of edges.
For an edge set $F\subseteq E$ and $v\in V$, we denote by $\deg_F(v)$ the number of edges of~$F$ incident to $v$. For any two edge sets $F_1,F_2\subseteq E$, the symmetric difference $F_1\oplus F_2$ denotes $(F_1\setminus F_2)\cup(F_2\setminus F_1)$.

For a natural number $t$, we say that an edge set $F\subseteq E$ is a {\bf \em $t$-matching} if $\deg_{F}(v)\leq t$ for every $v\in V$. 
$t$-matchings belong to a wider class of $b$-matchings, where for every vertex $v$ of $G$, we are given a natural number $b(v)$ and a subset of edges is a {\bf \em $b$-matching} if every vertex $v$ is incident to at most $b(v)$ of its edges. A $b$-matching of $G$ of maximum weight can be computed in polynomial time. We refer to Lov\'asz and Plummer~\cite{LovaszPlummer2009} for further background on $b$-matchings.

Let $M$ be a $b$-matching. We say that an edge $e$ is {\bf \em matched} (in $M$) if $e \in M$ and {\bf \em unmatched} (in $M$) otherwise. Additionally, an edge belonging to $M$ will be referred to as an {\bf \em $M$-edge} and an edge not belonging to $M$ as a {\bf \em non-$M$-edge}. We call a vertex $v$ {\bf \em deficient} or {\bf \em unsaturated (in $M$)} if $\deg_M(v)<b(v)$ and {\bf \em saturated} (in $M$) if $\deg_M(v)=b(v)$. {\bf \em An $M$-alternating path} $P$ is any sequence of vertices $(v_1,v_2,\ldots,v_k)$ such that edges on $P$ are alternately $M$-edges and non-$M$-edges and no edge occurs on $P$ more than once.
{\bf \em An $M$-alternating cycle} $C$ has the same definition as an $M$-alternating path except that $v_1=v_k$ and additionally $(v_{k-1},v_k)\in M$ iff $(v_1,v_2)\notin M$. Note that an $M$-alternating path or cycle may go through some vertices more than once but via different edges.
An $M$-alternating path is called {\bf \em $M$-augmenting} if it begins and ends with a non-$M$-edge and if it begins and ends with a deficient vertex. In the case when such an $M$-alternating path begins and ends with the same deficient vertex $v$, it has to hold that $deg_M(v) \leq b(v)-2$. We say that $M$ is a {\bf \em maximum} $b$-matching if there is no $b$-matching of $G$ with more edges than $M$. 
An {\bf \em application} of an $M$-alternating path or cycle $P$ to $M$ is an operation whose result is $M\oplus P$. 
An edge in a minor of $G$ is denoted by its preimage.

For a vertex $u$ matched in a $1$-matching $M$, by $M(u)$ we denote the vertex $v$ such that $(u,v) \in M$.
In all figures in the paper thick edges denote matched  edges, belonging to $M$ or $M'$, and thin ones unmatched ones.
All paths and almost all cycles occurring in the paper after Section \ref{decomp} are $M$-alternating or $M'$-alternating. 

An instance of the triangle-free $2$-matching problem consists of an undirected graph~$G=(V,E)$ and the goal is to find a maximum triangle-free $2$-matching of~$G$.

\section{Decomposition} \label{decomp}
Let $M$ denote a triangle-free $2$-matching. An $M$-alternating path or cycle $P$ is {\bf \em feasible} (w.r.t. $M$) if $M \oplus P$ is triangle-free and it is {\bf \em amenable} (w.r.t. $M$)
if for every triangle $t$ contained in $M \cup P$, it holds that some two edges of $t$ are consecutive on $P$. Note that an $M$-alternating path or cycle $P$ that is amenable is also feasible.
Consider the following example from Figure \ref{amenableexample}. Here the path $P'=(s,a,c,b,a,d,f,z)$ is  feasible and so is each of its even-length subpaths starting at $s$. On $P'$ we first go through an edge $e=(a,c)\in M$ of a triangle $t=(a,c,d)$ and only later and not immediately do we go through another edge $e'=(a,d) \notin M$ of $t$. Therefore, $P'$ is not amenable. However, we may reorder edges of $P'$ so that $P'$ becomes amenable. 

\begin{figure}[h]
\centering{\includegraphics[scale=0.9]{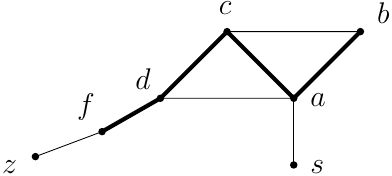}}
\caption{ An amenable path $(s,a,b,c,a,d,f,z)$.}
 \label{amenableexample}
\end{figure}

Let  $N$ denote a triangle-free  $2$-matching different from $M$. Suppose that we would like to decompose $M \oplus N$ into $M$-alternating paths and $M$-alternating cycles so that each path and cycle in the decomposition is amenable with respect to both $M$ and $N$. Let us call such a decomposition {\bf \em ideal}.
Let $e_1 \in M\setminus N, e_2 \in N\setminus M$ be two edges of $M \oplus N$ such that a vertex $v$ is their common endpoint. We say that they are {\bf \em bound (via $v$)} in the decomposition $D$ if they occur consecutively on some path or cycle of $D$. Clearly, any decomposition of $M \oplus N$ into alternating paths and cycles is uniquely determined by establishing for each edge $e=(v,w) \in M \oplus N$ with which edge it is bound via $v$ and with which via $w$, where  $e$  may also  not be bound via one or both of its endpoints. To {\bf \em bind} two edges in a decomposition $D$ means to place them on an alternating path or on an alternating cycle so that they are bound in $D$.

\begin{observation}\label{decompo1}
 Suppose that an alternating path or cycle $P$ contains an edge $e$ of a triangle $t$ in $M \oplus N$, which is neither preceded nor followed by another edge of $t$.
Let $e_1, e_2$ denote the other two of the three edges of $t$.
Then $P$ may belong  to some ideal decomposition $D$  only if

\begin{enumerate}
\item $|\{e_1, e_2\}\cap M|=1=|\{e_1, e_2\} \cap N|$,
\item $e_1, e_2$ are bound in $D$.
\end{enumerate}

\end{observation}

More generally, we can conclude:
\begin{corollary} \label{colideal}
Let $M$ and $N$ denote two triangle-free $2$-matchings.
Then, a decomposition $D$ of $M \oplus N$ is ideal if and only if for any triangle $t$ contained in $M \cup N$ it holds that (at least) two of its edges are bound in $D$.
\end{corollary}

Because of point $2$ of Observation \ref{decompo1} we need to pay attention to the following type of subgraphs.
Let $S$ denote a subgraph of $M \oplus N$ on vertices $a,b,c,d$ containing five edges: $(a,b), (a,d), (d,c) \in M$ and $(d,b), (b,c) \in N$.
We call such a subgraph {\bf \em dangerous}. We notice that if the edges $(a,b), (b,c)$ follow each other on an alternating path $P$ and $(b,c)$ is not followed by $(c,d)$, then  $P$ cannot belong to an ideal decomposition, because
the edge $(b,d)$ would have to be bound both with $(d,a)$ and with $(d,c)$, which is clearly impossible.  We call  $(a,b,c)$ a {\bf \em bumpy} path (of $S$.) 

\begin{figure}[h]
\centering{\includegraphics[scale=0.9]{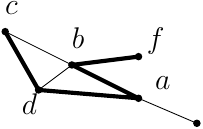}}
\caption{ A bumpy path $(a,b,c)$.}
 \label{bumpy}
\end{figure}

Now we can notice that avoiding or traversing bumpy paths in such a way that point $2$ of Observation \ref{decompo1} is satisfied, 
is impossible if $M \oplus N$ contains a subgraph on $5$ vertices of the form shown in Figure \ref{dense} (i). On the other hand, we can partition
the edges of this subgraph into an alternating cycle $C=(a,b,d,c)$, which is amenable w.r.t. both $M$ and $N$ and an alternating path $P=(x,d,a,x,b,c)$, which is amenable w.r.t. $M$. but not w.r.t. $N$. If, however, we reorder the edges of $P$ into $P'=(x,a,d,x,b,c)$ we obtain an alternating path amenable w.r.t $N$.

Let $H=(V_H,E_H)$ be an induced subgraph of $M \oplus N$. We say that it is {\bf \em dense} if  either  (i) $|V_H| =4$ and $|E_H|=6$ and $|E_H \cap M|=3$ or (ii) $|V_H| =5$ and $|E_H|\geq 9$ and $\lfloor\frac{|E_H|}{2} \rfloor \leq |E_H \cap M| \leq \lceil \frac{|E_H|}{2} \rceil$. We call such a subgraph a dense $4$-subgraph or a dense $5$-subgraph, respectively. Let us notice that each dense $4$-subgraph contains a unique alternating cycle.
In view of the above observations regarding Figure \ref{dense} we have:

\begin{fact}\label{fdense}
Let $M,N$ be two triangle-free $2$-matchings. The edges of any dense $5$-subgraph $H=(V_H,E_H)$ of $M \oplus N$ can be decomposed into $C$ and $P$ such that $C$ is an alternating cycle on $4$ vertices that is amenable w.r.t. both $M$ and $N$, $P$ is an alternating path or cycle that is amenable path w.r.t. $M$ and there exists a path or cycle $P'$ on the same set of edges as $P$ that is amenable w.r.t. $N$.
\end{fact}

Let $E'$ be any subset of edges of the graph. We say that $\sigma(E')=(v_0, v_1, \ldots, v_l)$ is an {\bf \em ordering} (of $E'$) if $|E'|=l$ and there exists
a bijection $o: \{1,2,\ldots, l\} \rightarrow E'$ such that for each $i: \ 1 \leq i \leq l$ it holds that $o(i)=(v_{i-1}, v_i)$.

\begin{figure}[h]
\centering{\includegraphics[scale=0.8]{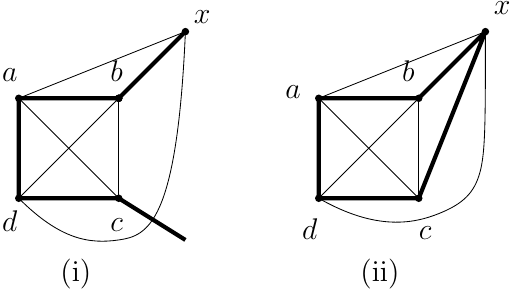}}
\caption{ Dense subgraphs on $a,b,c,d,x$.}
 \label{dense}
\end{figure}

The theorem that we prove is:

\begin{theorem} \label{theorem:decomposition}
Given two triangle-free  $2$-matchings $M_1$ and $M_2$, there exists a decomposition of $M_1 \oplus M_2$ into edge sets $E_1, E_2, \ldots, E_k$ such that
for each $E_i$ there exist orderings $\sigma_1$ and $\sigma_2$ of $E_i$ such that for each $j \in \{1,2\}$ it holds that  $\sigma_j(E_i)$ is an $M_j$-alternating path/cycle  amenable with respect to $M_j$. \\
Moreover, $\sigma_1(E_i) \neq \sigma_2(E_i)$ only if $|E_i|\geq 6$  and $E_i$ goes through a dense $5$-subgraph.
\end{theorem}

\dowod
Let us rename the matchings $M_1$ and $M_2$ as $M$ and $N$.  
We present the method of computing such a decomposition.  At the beginning the set $D$ of alternating cycles and paths is set to an empty set.
In the first stage we exhaustively find any dense $4$-subgraph $H$ of $M \oplus N$ and add an alternating cycle contained in it to $D$. If such a dense subgraph $H$
is contained in a dense $5$-subgraph $H'$ we order the remaining edges of $E_{H'}$ in two possible ways as indicated in Fact \ref{fdense}. If we obtain a path, then we continue extending it in the second stage. For every triangle $t \subseteq M \cup N$  such that $M \cap N \cap t \neq \emptyset$, we bind two edges of $t \setminus (M \cap N)$. It is easy to notice that no two such operations of binding are conflicting. This way we are done with all triangles of $M \cup N$ that are not contained in $M \oplus N$.


We now proceed to the description of the second stage. At each moment only one path $P$ (that may turn out to be a cycle) is under construction. We extend $P$ edge by edge (each edge is added to the endpoint of $P$) and may end the construction of $P$ only if both $M\oplus P$ and $N \oplus N$ are $2$-matchings. (Note that $P$ that is maximal (under inclusion) or an alternating cycle satisfies: both $M\oplus P$ and $N \oplus N$ are $2$-matchings.)  While extending $P$ we respect the previous binding operations. 
Apart from ensuring that the constructed paths and cycles are alternating, we apply the following two rules.

{\em Rule} $1$: Let $e$ be an edge contained in a triangle $t$ of $M \oplus N$  such that (i) either $t \cap M =\{e\}$ or $t \cap N =\{e\}$ (i.e., $e$ is the sole edge of $M$ or $N$ on $t$) and (ii) $e$ is not already bound with any edge of $t$. Then, whenever such $e$ is included in $P$, we bind it with another edge of $t$.
{\em Rule} $2$:  Whenever an edge $e$ belonging to a triangle $t$ of $M \oplus N$ is  both preceded and succeeded on $P$ by an edge not belonging to $t$, we bind the other two edges of $t$. 
We can notice that no two executions of Rule $1$ are conflicting, because if an edge $e$ is a sole edge of $M$ of two triangles $t_1, t_2$, then an edge $e'$ preceding $e$ on $P$ must belong to $t_1$ or $t_2$ and therefore we need to take care of only one of these triangles regarding Rule $1$.
Two executions of Rule $2$ can be conflicting only if $P$ contained a bumpy path on a dangerous subgraph. Below we show how to avoid bumpy paths.
Similarly, Rule $1$ cannot be carried out only if  it is conflicting with some execution of Rule $2$ and if some path or cycle in the decomposition contains a bumpy path. This follows from the following.
Suppose that at some point we have included an edge $e=(b,c) \in M$ such that the other two edges of $t=(a,b,c)$ belong to $N$, $c$ is the ending point of a currently constructed path $P$ and  the edge $(a,c)$ is already bound with some edge different from $e$. It means that $M \oplus N$ contains another triangle $t'=(a,c,d)$ that contains two edges of $M$. Since $(a,c)$ is bound with $(c,d)$, $D$ contains a path or cycle $P'=(\ldots, d,a,b,f, \ldots)$, where $f \neq c$. But this means that $(d,a,b)$ is a bumpy path in a dangerous subgraph that contains also the edges $(d,c), (a,c), (c,b)$.

We deal now with bumpy paths.
Let $S$ be a dangerous subgraph of $M \oplus N$ on vertices $a,b,c,d$ as described above.
First we notice that $M \oplus N$ cannot contain the edge $(a,c)$, which would necessarily have to belong to $N$, because then in the first stage we would have removed the alternating cycle $(a,b,d,c)$. Thus:
\begin{claim}\label{trbumpy}
 Any path $(u,v,w)$ contained in a triangle $t=(u,v,w)$ of $M \cup N$ is not bumpy.
\end{claim}

Suppose that we are in the middle of constructing an alternating path $P$ and we have ended  on $(c,b)$, which belongs to a bumpy path $(c,b,a)$ in a dangerous subgraph $S$ described above. Then we extend $P$ by $(b,f)$. We prove that  $(c,b,f)$ such that $(b,f) \in M$ and $f \neq a$ is not a bumpy path.
If $(c,b,f)$ is a bumpy path in a dangerous subgraph $S'$, then $S'$ must contain also either (1) $(b,d)$ or (2) $(b,a)$. In  case (1) $S'$ would also have to contain $(d,c)$ and $(d,f)$ and both would have to belong to $M$ but this is impossible because $d$ cannot be incident to three $M$-edges. In  case (2) $S'$ would have to contain the edge $(a,c)$ but we have observed above that this cannot be the case. If $(b,f)$ does not exist, we end $P$ on $(c,b)$. Suppose next that when constructing an alternating path $P$ we have ended  on $(a,b)\in S$. Then we can extend $P$ by $(b,d)$. Note that the path $(a,b,d)$ is not  bumpy by Claim \ref{trbumpy}.
 This means that we can always extend $P$ so that the added edge together with the preceding one are not a bumpy path.

How do we choose the starting edge of a path? If there exists an unfinished  path $P$ constructed during the first stage or a pair of bound edges, also denoted $P$, unassigned to any path or cycle of $D$, we extend $P$. Otherwise, we start from any edge. Let us finally point out the following: (i) if during construction of $P$,  we apply Rule $2$, then it may happen that a pair of edges $e_1, e_2$ that becomes bound is such that $e_1$ is the beginning edge of $P$ (which has just gone through $e_3$ such that $e_1, e_2, e_3$ form a triangle) and, (ii) if at a certain point $P$ could form an alternating cycle, but by binding the beginning edge $e_1$ of $P$ with the ending one $e_2$, we would violate Rule $1$ or $2$, we continue extending $P$ and (thus do not form that cycle.)

\koniec

\begin{corollary}\label{decompcore}
Let $M$ be a triangle-free $2$-matching, which is not maximum. Then there exists an $M$-augmenting path $P$ that is amenable with respect to $M$. \\

A symmetric difference of two triangle-free $2$-matchings $M_1, M_2$ can be decomposed into a collection  of alternating paths and alternating cycles such that each of the paths and cycles is feasible with respect to both $M_1$ and $M_2$.

\end{corollary}

\section{Outline}

Given an undirected simple graph $G=(V,E)$, a triangle-free $2$-matching $M$ and an unsaturated vertex $s$, we would like to find an augmenting  path $P$ starting at $s$ such that $M \oplus P$ is again triangle-free. From the previous section we know that we may restrict ourselves to augmenting amenable paths. In Figure \ref{trg3} we have three graphs such that in each of them the only unsaturated vertices are $s$ and $t$. Each of these graphs contains an augmenting path between $s$ and $t$. They are, respectively, $(s,d,b,c,e,t), \ (s,f,b,a,d,e,a,c,g,t),$ \\ 
$(s,h,a,b,e,d,b,c,g,f,c,a,i,t)$.  However, none of them is feasible, because  an application of any one of them to $M$ creates a triangle $(a,b,c)$. 

The algorithm for finding an augmenting amenable path starting at $s$ builds a subgraph $S$ of $G$ containing  paths beginning at $s$. In this respect it is similar to the algorithm for finding an augmenting path in the problem of finding a maximum matching. Let $P_v$ denote a  path staring at $s$ and ending at $v$. To the subgraph  $S$ we successively add edges and vertices in such a way that for any even length  path $P_v$ in $S$ it holds that $S$ contains  an even length amenable path $P^A_v$. Whenever we would like to augment $S$ by extending a path $P^A_v$ with the edges $e=(u,v)\notin M$ and $e'=(v,w_1)\in M$ but  the path $P^A_v \cup \{e,e'\}$ is non-amenable,  we forbid $e$ to be followed by $e'$ (in what sense we explain further) and do not add $e'$ to $S$ but may add $e$ and another edge  $e''=(v,w_2) \in M$.  It may happen that at some later point we discover in $G$ an amenable path such that $e$ and $e'$ are consecutive on it and then we extend $S$ accordingly. For example, in Figure \ref{trgtype2} at the  point when $S$ contains the edges $(s,d), (d,b)$, we extend it by adding $(b,c), (c,a)$ and $(b,f), (f,g)$. Later  when $S$ contains also $(g,a),(a,b)$  we will be able to add also $(c,e)$.

\begin{figure}[h]
\centering{\includegraphics[scale=0.6]{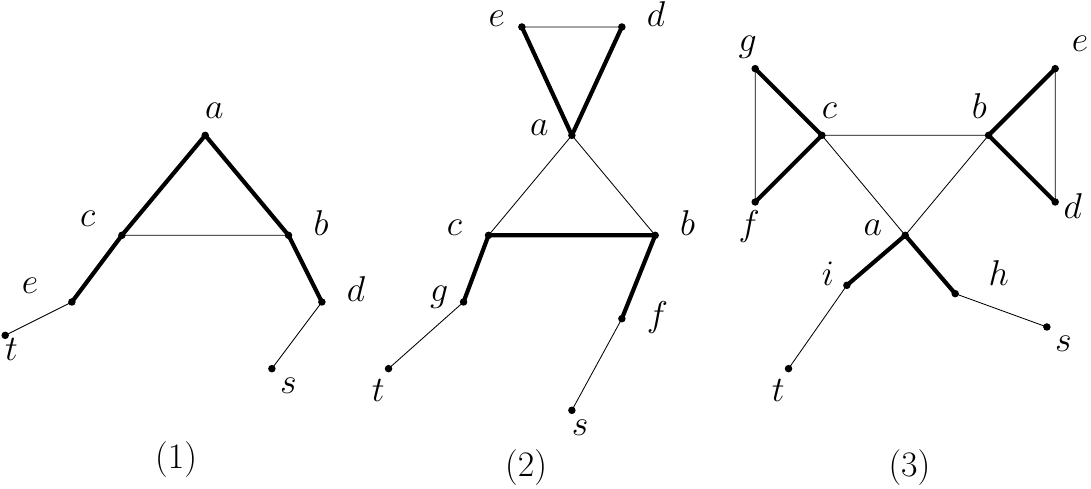}}
\caption{\small In each case the application to $M$ of an augmenting path  from $s$ to $t$  creates a triangle $t=(a,b,c)$.}
 \label{trg3}
\end{figure}

\begin{figure}
\centering
\begin{subfigure}{0.3\textwidth}
\includegraphics[scale=0.6, width=\textwidth]{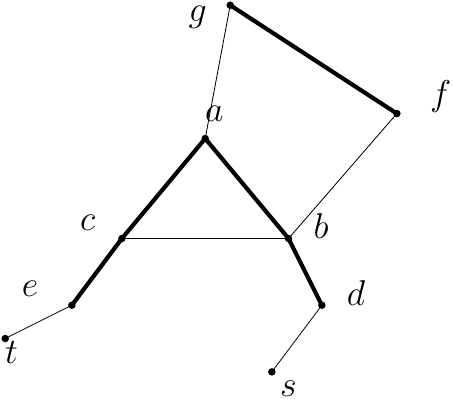}
\caption{}
 \label{trgtype2}
\end{subfigure}
\begin{subfigure}{0.3\textwidth}
\includegraphics[scale=0.6, width=\textwidth]{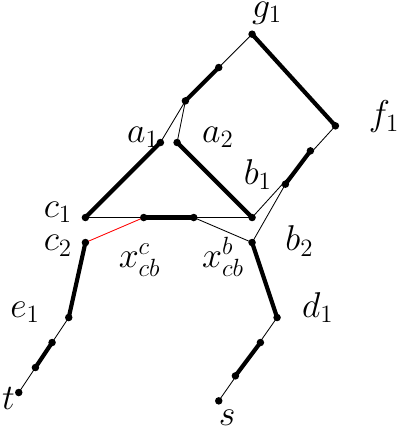}
\caption{}
\label{trgtype2half}
\end{subfigure}
\caption{\small At the point when we have not checked adding  the edge $(g,a)$ yet (it might also be the point when we have already added the edge $(a,g)$ but haven't added $(f,g)$ or $(f,b)$), the triangle $t=(a,b,c)$ is vulnerable and we remove the hinge $h(b,c,e)=(x^c_{cb},c_2)$ from $G_2$. After the addition of $(a,g)$ and $(a,b)$ to $S$ the triangle $t$ ceases to be vulnerable, because $S$ contains a path $(s,d,b,f,g,a,b,c,e)$ that is amenable on $t$ and hence $h(b,c,e)$ is passable and we restore it in $G_2$.}
\label{triangle2wersje}
\end{figure}

\vspace{2cm}

To be able to carry out the operation of forbidding some edges to be followed by certain others,
the algorithm works on $G'=(V' \cup X, E')$ obtained from $G$ by splitting vertices and replacing each edge not belonging to $M$ with a  path of length $3$. More precisely, we proceed as follows. For each edge $(u,v) \notin M$ we add in $G'$ two new vertices $x^u_{uv}, x^v_{uv} \in X$ and three edges: $(u, x^u_{uv}), (x^u_{uv}, x^v_{uv}), (x^v_{uv},v)$. The edge $(x^u_{uv}, x^v_{uv})$ is added to $M'$.
 Next for each vertex $v \in G$ we add  two new vertices $v_1, v_2 \in V'$, called copies of $v$.   We replace each edge $(v, x^v_{vw})$ with two edges $(v_1, x^v_{vw})$ and  $(v_2, x^v_{vw})$, each of which is called  a {\bf \em half-edge}.  We also form a matching $M'$ in $G'$.  For each edge $(u,v)\in M$ we replace it with one edge of $M'$ of the form $(u_i,v_j)$, where $i,j \in \{1,2\}$. We carry out the replacements so that as a result for each vertex $v$ of $G$ each of its two copies in $G'$ is matched in $M'$ to at most one vertex. Observe  that each edge $(u,v) \in M$ is replaced with one edge in $G'$ and each edge $(u,v) \in E \setminus M$ with five.
  Let $M'$ (resp.$M$) denote a matching of $G'$ (resp.$G$) and let $s$ be a fixed vertex of $G$ (and $G'$)  unsaturated in $M$ (unmatched in $M'$). At the beginning of the algorithm we define a graph $G_2$ equal to $G'$. We then modify $G_2$ during the algorithm as we build the subgraph $S$ and search for an amenable  path from $s$ to an unsaturated vertex in $G$.

We classify triangles of $M$ according to the  number of contained edges of $M$. We say that a triangle $t$ of $G$ is {\bf \em of type $i$} if it has exactly $i$ edges of $M$. Clearly, $i$ can only belong to the set $\{0,1,2\}$.
Let $t=(a,b,c)$ denote a triangle of type $1$ or $2$. If $t$ is of type $1$ and $(b,c) \in M$ or of $t$ is of type $2$ and $(b,c) \notin M$, then we say that $a$ is {\bf \em the top vertex} of $t$. A vertex $a$ is defined as a top vertex of a  triangle $t=(a,b,c)$ of type $0$ if among all vertices of $t$ it is the first such that $S$ contains some edge $(a,v) \notin M$ and $a$ was added to $S$ before $v$. (If such vertex does not exist, then $t$ has no top vertex.) By writing $t=(a;b,c)$ we denote that $a$ is the top vertex of a triangle $t$. \\

Let $e=(b,c), e'=(c,d)$ be two edges sharing an endpoint $c$ such that for some triangle $t$ we have $e \in t\setminus M, e' \in M \setminus t$. Then a {\bf \em a hinge (of $t$)} denoted as  $h(e,e')$ or  $h(b,c,d)$ is defined as that half-edge of $(b,c)$, which connects $x^c_{bc}$ to the edge $(c,d)$, i.e.,   $(x^c_{cb}, c_i)$, where  $c_i$ denotes that copy of $c$, which is $M'$-matched to $d_j$. A hinge $h(b,c,d)$ is said to be incident to $c$.  Let $t=(a;b,c)$ be a triangle of type $1$ or $2$ and $(b,d), (c,f)$ edges of $M\setminus t$.  Two {\bf \em basic} hinges of $t$ as defined as follows: $h(a,b,d)$ and $h(a,c,f)$ if $t$ is of type $1$ and $h(b,c,f)$ and $h(c,b,d)$ if $t$ is of type $2$. See Figure \ref{hinges}.
Hinges play an important role in characterising non-amenable paths:

\begin{observation} \label{zawobs}
Let $t=(a;b,c)$ be  any triangle of $G$ and $(b,d), (c,f)$ edges of $M\setminus t$. (See Figure \ref{hinges}.) Then a  path $P$ of $G_2$ is non-amenable on $t$ if and only if:
\begin{enumerate}
\item $P$ contains both basic hinges of $t$, if $t$ is of type $1$ or $2$;
\item $P$ contains six hinges of $t$, if $t$ is of type $0$.

\end{enumerate}

\end{observation}
\dowod It follows from the fact that a path non-amenable on $t$ of type $i$ must contain all $i$ edges of $t \setminus M$, none of which is (immediately) preceded or followed by an edge of $t \cap M$.
\koniec

\begin{figure}[h]
\centering{\includegraphics[scale=0.8]{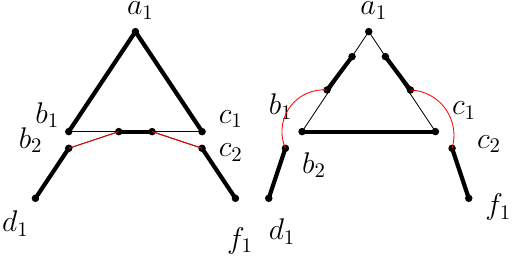}}
\caption{Basic hinges are drawn in red}
 \label{hinges}
\end{figure}

The operation of forbidding an edge $e$ to be followed by $e'$ consists in removing the hinge $h(e,e')$ (i.e., an appropriate half-edge) from $G_2$.
We say that a path $P$ in $G_2$ goes through or contains the edge $(u,v) \notin M$ of $G$ if $P$ contains the edge $(x^u_{uv}, x^v_{uv})$.

\subsection{Structure $S$ and graph $G_2$}

Finding an $M$-augmenting path in $G$ starting at $s$ may be reduced to finding  an $M'$-augmenting path in $G'$ starting at $s$.
To make the presentation clearer, we extend $G_2$ and $G'$ so that $s$ is the only unmatched vertex in $M'$  in the following manner.
For every unmatched in $M'$ vertex $v_i$ different from $s$ we add a new vertex $v^M_i$ and a new edge $(v_i, v^M_i)$ to $M'$. The set of all such new vertices is denoted as $V_s$.
Thus we get an obvious correspondence between $M$-augmenting paths in $G$ starting in $s$ and even-length  paths beginning in $s$ in $G'$ and ending in vertices of $V_s$.
\begin{observation}
An even length  path from $s$ to $v^M_i$ in $G'$ corresponds to an $M$-augmenting path from $s$ to $v$ in $G$.
\end{observation}

The algorithm maintains a structure $S$ that at the beginning consists only of the vertex $s$. Further on, a vertex $v$
is contained in $S$ only if $S$ contains a path from $s$ to $v$.
Let $\mathcal{E}(S)$ denote the set $\{v \in S: S$ contains an even-length  path from $s$ to $v \}$.  Vertices of $\mathcal{E}(S)$ are called {\bf \em even}.
Let $\mathcal{O}(S)$ denote the set $\{v \in S: S$ contains an odd-length path from $s$ to $v \}$. 
If a vertex $v$ belongs to $\mathcal{O}(S) \cap \mathcal{E}(S)$, then $S$ contains a blossom $B$, defined as follows.
\begin{definition}(\cite{Duan})
A blossom is identified with a vertex set $B$ and an edge set $E_B$ on $B$. If $v$ is a vertex of $G'$, then $B=\{v\}$ is a singleton blosssom with $E_B=\emptyset$.
Suppose there is an odd-length sequence of vertex-disjoint blossoms $B_0, B_1, \ldots, B_k$ with associated edge sets $E_{B_0}, \ldots, E_{B_k}$. If $\{B_i\}$ are connected in  a cycle by edges $e_0, e_1, \ldots, e_k$, where $e_i \in B_i \times B_{i+1}$ (modulo $k+1$) and $e_1, e-3, \ldots, e_{k-1}$ belong to $M'$, then $B= \sum_i {B_i}$ is also a blossom associated with edge set $E_B=\sum_i {E_{B_i}} \cup \{e_0, e_1, \ldots, e_k\}$. Blossoms $B_0, \ldots, B_k$ are called subblossoms of $B$ and blossom $B_0$ - the {\bf \em base subblossom} of $B$. The one vertex $b$ in $B$, which is left unmatched in $M' \cap E_B$, is called the {\bf \em base} of $B$. $S$ contains  an even-length  path $P$ from $s$ to a vertex $b$ such that $P$ and $B$ are vertex-disjoint except for $b$.
\end{definition}
 A blossom $B$ that is not contained in any other blossom is called {\bf \em outer}. For a vertex $v$ by $B(v)$ we denote the outer blossom containing $v$. If $v$ does not belong to any blossom, then $B(v)$ denotes a singleton blossom consisting only of $v$.
In many algorithms dealing with similar structures, a blossom is  treated as one vertex and shrunk. The algorithm in the paper also maintains $\bar{S}$, the subgraph $S$ with blossoms contracted. $\bar{S}$ is a tree. It is also an ''alternating tree''. Any path from a vertex in $S$ to $s$ is alternating. The definitions of structure $S$ and subgraph $\bar{S}$ used here are  the same as in \cite{Gabow17} Section 3.2

For every vertex $v \in \mathcal{E}(S)$, we associate with $v$  one even-length  path $P_v$ contained in $S$ starting at $s$ and ending at $v$. Generally, we want $P_v$ to satisfy the following property:
If $P_v$ uses at least one edge of a blossom $B$, then all edges of $P_v \cap B$ form one subpath $P'$ of $P_v$ such that
 one endpoint of $P'$ is the base of $B$.  (In some cases, we will allow $P_v$ to slightly violate this property - compare Figure \ref{strukturaS}.)

 An edge $(u,v) \in E'\setminus M'$ is $S$-augmenting if $B(u) \in S, v \notin B(u)$ and either (i) $v \notin S$ or (ii) $v \in \mathcal{E}(S)$. After the addition of such an edge to $S$, $S$ either contains a new vertex $v$ or $S$ contains a new blossom, which is formed by a fundamental cycle $C$ in $\bar{S} \cup (u,v)$.
For an edge $e=(u, v) \in E' \setminus M'$, let  $e^+= e \cup \{(u, M'(u)), (v, M'(v))\}$. 
Let $ S^+$ denote all edges of $G_2$ with both endpoints in $S$. For any two vertices $u_i, v_j$ such that $(u,v) \notin M$, we say that a path $(u_i, x^u_{uv}, x^v_{uv}, v_j)$ contained in $G_2$ is a {\bf \em segment} of $G_2$. We denote it as $\sigma=(u_i,v_j)$.

Our goal is to compute a set of $M$-augmenting paths starting at  $s$ instead of all vertices of $V' \cup V_s$ reachable from $s$ via an even-length  path.
In the case of $2$-matchings, we can  notice that if $v_i \in \mathcal{E}(S)$, then we may sometimes ignore extensions to $S$, which would result only in that the other copy  of $v$ would belong to $\mathcal{E}(S)$. 
For this reason we introduce the notion of  alternating $s$-reachability.
We say that two subgraphs $H_1, H_2$ of $G_2$ have the same {\bf \em alternating $s$-reachability} if for every vertex $v \in V \cup V_s$ it holds that $H_1$ contains an even length  path from $s$ to some copy of $v$ if and only if $H_2$ does. 

\begin{definition}



An $S$-augmenting hinge $e=h(u,v,w)$ is {\bf \em $S$-impassable} if    for some vertex $z \in \mathcal{E}(S\cup e^+)$ a  path $P_{z}$  is non-amenable on a  triangle $t$ containing $(u,v)$.


An $S$-augmenting hinge $e=h(u,v,w)$ is $(S^+)$-{\bf \em impassable} if  there exists a vertex $z \in \mathcal{E}(S\cup e^+)$ such that each even-length path from $s$ to $z$ in $S^+ \cup e^+$ is non-amenable on a  triangle $t$ containing $(u,v)$.


A hinge that is not   impassable is {\bf \em passable}.


An $S$-augmenting  hinge $h(e,e')=h(u,v,w)=(v_j, x^v_{uv})$  is {\bf \em safe} if (i) $(v,w)$ belongs to a triangle of $G$, (ii) $\mathcal{E}(S)$ already contains some copies of $v$ and $w$ and (iii)  some copies of $v$ and $w$ already belong to a common blossom or the addition of $h(e,e')$ would result in creating a new blossom $B'$, whose only new edge of $M'$ is $(v,w)$.

An $S$-augmenting hinge $h(u,v,w)$ is {\bf \em vulnerable} if  it is impassable (on a triangle $t$) and  not safe.  We call such a hinge a {\bf \em vulnerable} hinge of $t$ and denote  as $h(t)$.  If some hinge of a triangle $t$ is vulnerable, then $t$ is also called vulnerable.

\end{definition}

Suppose that $G_2$ is as shown in Figure \ref{trg3} (2) and that the edges of $M$  correspond to the following  edges of $M'$:  $(f_1, b_2), (b_1, c_1), (c_2,g_1), (a_1, d_1), (a_2, e_1)$. Let us consider the  point when $S$ contains a path $(s,f,b,a,d,e,a)$ and we are thinking of extending $S$ by a hinge $h(c,a,e)=(x^a_{ac}, a_2)$. Notice that at this point $S$ also contains a blossom $B=(x^a_{ab}, a_1, d_1, x^d_{ed}, x^e_{ed}, e_1, a_2, x^a_{ab})$. A path from $s$ to $x^c_{ac}$ in $S \cup h^+(c,a,e)$ is amenable and hence we extend $S$ by $h^+(c,a,e)=(x^a_{ac}, a_2) \cup (x^a_{ac}, x^c_{ac})$. Next, if we would like to extend $S$ by $h(a,c,g)$, we notice that a path in $S \cup h^+(a,c,g)$ from $s$ to $g_1$ is non-amenable on $(a;b,c)$ and so is every path in $S^+ \cup h^+(a,c,g)$ from $s$ to $g_1$, which means that $h(a,c,g)$ is impassable.
 In Figure \ref{trg3} (1) the hinge $h(b,c,e)$ is impassable and in Figure \ref{trg3} (3) the hinges $h(b,a,i)$ and $h(c,a,i)$ are impassable.

It may happen for a hinge $h(e,e')$    that an even-length path in $S \cup h^+(e,e')$ from $s$ to $w_k$ is non-amenable  but  $ S^+ \cup h^+(e,e')$ contains an amenable even-length path from $s$ to $w_k$. Consider an example from Figure \ref{triangle1overlap} (a). Suppose that we first add to $S$ the edges $(s,b), (b,b')$. Next, we add  pairs of edges $(b',x) \notin M, (x, M(x))$:
$(b,a), (a,a'), (a, a''), (b,d), (d,d'), (d,d'')$.
Further, we add also $(a', a'')$ and form a blossom $B_a$ with a base in $x^a_{ab}$. When we consider edges incident to $B_a$ we notice a vulnerable hinge $h(a,c,f)$ 
(the path $(s,b',b,a,a',a'',a,c,f)$ is non-amenable on $t=(a,b,c)$). We also add $(a,c), (c,b)$ to $S$. Further, when we consider $d'$, we add $(d',d'')$ and form a blossom $B_d$. When we check if we can extend $S$ by $(c,d)$, we notice that $S \cup (c,d), (c,f)$
 contains a path non-amenable on $t'=(d,c,b)$. But $ S^+\cup (c,d), (c,f)$ contains a path $(s,b',b,b,a,a',a'',c,b,d,d', d'',d,c,f)$, which is amenable. This path is not contained in $S$, because $S$ does not contain the hinge $h(d,b,c)$ (instead, $S$ contains $h(b',b,c)$.) It will turn out, however, that there are only three cases, when we  need to consider $ S^+$ instead of $S$ to notice that a hinge is not impassable: (i) described above when $t=(a;b,c), t'=(d,b,c)$ are of type $1$, (ii) when $t=(a;b,c)$ is of type $1$ and $t'=(b;c,d)$ is of type $2$ and (iii) when $t$ is toppy.

\section{Algorithm in more detail}

\subsection{Main ideas}
The pseudocode of an algorithm  computing an amenable augmenting paths starting at  $s$ is given below as Algorithm \ref{algnew}. Strictly speaking, Algorithm \ref{algnew} computes a set $S$ of amenable paths starting at $s$ such that for each vertex $u$ of $V \cup V_s$, $S$ contains an even-length path from $s$ to $u$ if and only if $G$ contains a feasible even-length alternating path from $s$ to $u$.
Special blossoms used in this algorithm are defined later. For now, one can think of  Algorithm \ref{algnew} as though it did not contain lines \ref{spec1} and \ref{spec2}. Let us now outline the main ideas behind the algorithm.

\begin{algorithm}[H]
\begin{algorithmic}[1]

	\State Let $s$ be an unmatched vertex in $M'$. Set $S \leftarrow \{s\}$.  
	Assume that each vertex $v$ is contained in a blossom containing only $v$.

	\While {$\exists$ an  $S$-augmenting  edge $e=(w,w') \in G_2 \setminus E_{sf}$  with $w \in \mathcal{E}(S)$} 
	
	\While  {$\exists$ an  $S$-augmenting  edge $e=(u_i,x^u_{uv}) \in G_2 \setminus E_{sf}$ incident to $B(w)$} \label{whileproc}
	\State \textsc{AugmentS$(u_i,x^u_{uv})$}
	\If{$x^v_{uv} \in \mathcal{E}(S)\setminus B(w)$}
	
	    \If {$(x^v_{uv},v_1) \in G_2 \setminus E_{sf}$ is  $S$-augmenting}
			   \State \textsc{AugmentS$(x^u_{uv},v_1)$}
			\EndIf
			\If {$(x^v_{uv},v_2) \in G_2 \setminus E_{sf}$ is  $S$-augmenting}
			   \State \textsc{AugmentS$(x^u_{uv},v_2)$}
			\EndIf
			
  \EndIf			
				
	\EndWhile
	\EndWhile
	
	\vspace{0.5cm}

	\Procedure{AugmentS}{$e=(u,v)$}
	
	\If {$e$ is impassable}
	  \If {$e$ is vulnerable}
		\State remove $e$ from $G_2$
		\Else 
		 \State add $e$ to $E_{sf}$
	  \EndIf
		
	\Else 
	  \If {$u, v \in \mathcal{E}(S)$}
		 \State \label{sig} add $e$ to $S$
		 \State \label{blos} form a new blossom $B$, which contains  $e$
		 
		   \If  {$B$ is  special} \label{spec1}
	     \State \label{spec2} remove hinge(s) $h(B)$ defined in Lemmas \ref{speciallem} and \ref{special0} from $S$ 
	 
		   \EndIf
	  \Else 
	     \State \label{sig+} add $e^+$ to $S$
			 
	  \EndIf
	
	 \State \label{vul} restore every  hinge $e_h$ that is not vulnerable 
	\EndIf
	
\EndProcedure
	
\end{algorithmic}
\caption{Computing a set of amenable paths starting at an unsaturated vertex $s$.}
\label{algnew}
\end{algorithm}

We want to extend the structure $S$ in a similar way as we would be extending it with respect to a $2$-matching $M$ in the original graph $G$. For this reason, (i) we
try to extend $S$ by segments, rather than only by edges of $G_2$ (that is why an extension by $(u_i,x^u_{uv})$ is followed by extensions by $(x^v_{uv},v_1), (x^v_{uv},v_2)$ whenever possible) and (ii) as long as a given  blossom $B(w)$ has an incident $S$-augmenting edge, we add first such edges to $S$, before proceeding to edges incident to another blossom.  

We say that a triangle $t=(a;b,c)$ of type $1$ or $2$ is {\bf \em toppy} if $ S^+$ contains a path $P$ from $s$ to $a$ that contains no edge of $t$.
We say that a triangle $t=(a;b,c)$ of type $0$  is {\bf \em toppy} if $ S^+$ contains an odd-length path $P$ from $s$ to $a$ that does not contain all edges of $t$.

If we built the structure $S$ in the graph  $G_2$ without any removed hinges, then a path non-amenable on any triangle could take only one of the three forms depicted in Figure \ref{trg3}. This is because whenever we process a blossom, we add all incident $S$-augmenting edges. Therefore, for example, if $G$ contains a triangle $t=(a;b,c)$ of type $1$ shown in Figure \ref{trg3} (2), and at some point the structure $S$ contains a path $(s,f,b,a,d)$, then at some later point $S$ could not contain a path of the form  $P'=(s,f,b,a,d,u,v,g,c,a,e)$, because after adding the edges $(b,a)$ and $(a,d)$ to $S$ (while processing $b$), we would also add $(a,e)$.  In $G_2$ with some hinges removed, a path non-amenable on a triangle of type $0$ or $1$ could potentially have a more complex form, for example, as in $P'$ - it can happen if a hinge $h(b,a,e)$ is removed in $G_2$.
In the first two lemmas below we show that it does not happen and   that the only forms of non-amenable paths that we are going to encounter in $S$ in the course of the algorithm still have the form  depicted in Figure \ref{trg3}. 

In the algorithm we must be able to recognize if $ S^+$ contains an even-length amenable path from $s$ to a given vertex $z$ or not. This is needed to establish if a hinge is vulnerable (and hence impassable) or if it ceased to be one. Suppose that we are examining a hinge $h(u,v,w)$.  The structure $S \cup h^+(u,v,w)$ contains only one  even-length  path $P_z$ from $s$ to $z$. We check if this path is amenable on a triangle $t_{uv}$ containing $(u,v)$. If it is, then  the hinge $h(u,v,w)$ is passable. Otherwise, we examine if $S^+$ contains another ''roundabout'' even-length path from $s$ to $z$ that does not contain both basic hinges of $t_{uv}$ (if $t$ is of type $1$ or $2$) or six hinges of $t_{uv}$ (if $t$ is of type $0$). It turns out that to do this, it suffices to verify only two things:
(i) if $t_{uv}$ is toppy and (ii) if there exists another triangle $t'$ sharing one edge with $t_{uv}$ such that in the algorithm, $t'$ has previously  been recognized as vulnerable. If any of these two points holds, then the hinge $h(u,v,w)$ is not vulnerable (and hence $ S^+$ contains an even-length amenable path from $s$ to a given vertex $z$ or $h(u,v,w)$ does not increase the alternating $s$-reachability of $S$). A high-level reason for this is the following. 

Suppose that  $t=(a;b,c)$ is a triangle of type $1$, see Fig.\ref{hinges}, and  to reach a vertex $z$ we have to go through the hinge $h(a,b,d)$. We want to check if $S^+ \cup h^+(a,b,d)$ contains an even length path $P_z$  that is amenable on $t$, i.e.  $P$ that  does not contain the other basic hinge $h(a,c,f)$ of $t$.   One possibility of the existence of such a path  is when $t$ is toppy. The other possibility is a path of the form $P_2=(s, \ldots, b', b, c,a, \ldots, a, b, d)$. In the second case, it might seem that if $S$ contains a path $P'=(s, \ldots, b', b)$ , then to get to $d$ we could extend this path $P'$ by going directly from $b$ to $d$ and not going through edges of $t$. However, such an extension might denote a path non-amenable on some $t'$ and therefore mean that $P_2$ is indispensable.

Similarly, if $t$ is a triangle of type $2$, then a path $P$ that contains one basic hinge $h(b,c,f)$ of $t$ is amenable on $t$  only if it does not contain the other basic hinge $h(c,b,d)$ of $t$. Hence, such $P$ should contain $(a,b)$ and  not contain $(d,b)$. One possibility of the existence of such a path is when $t$ is toppy. The other possibility is a path of the form $(s, \ldots, f', c, a, \ldots, a, b, c, f)$. However, in the second case, when there exists an odd-length alternating path from $s$ to $c$ that does not go through $t$, we prove below that we do not really need the hinge $h(b,c,f)$ to reach the edge $(c,f)$. 

 In Lemmas \ref{triangle1} and \ref{toplem} we prove that a toppy triangle of type $1$ or $2$ cannot be vulnerable and in Lemmas \ref{overlaptr0}, \ref{overlaptr1},
\ref{overlaptr2} that any two vulnerable triangles are edge-disjoint. These provide an efficient way of verifying whether a given hinge is impassable or not.
To be able to quickly notice when a vulnerable triangle $t$ becomes toppy, the top vertex of $t$ needs to remain unburied in any blossom that contains some edge of $t$.
That is why we do not form so-called special blossoms - these are, roughly speaking, blossoms that contain at least one edge of a vulnerable triangle.

To {\bf \em process} a blossom $B(v)$ means to carry out the loop {\em while} from line \ref{whileproc} of Algorithm \ref{algnew}. 
 
\subsection{Technical part}

In the first two lemmas we prove properties of vulnerable triangles of type $1$ and $0$. Their proofs are in Section \ref{deferred}.

\begin{lemma} \label{triangle1}
Let $t=(a;b,c)$ be a vulnerable triangle of type $1$  and $(a,d), (a,e), (c',c), (b',b)$ edges of $M$ incident to $t$. Then the vulnerable hinge $h(t)$ of $t$ is its basic hinge. If  $h(t)=h(a,b,b')$, then $S$ contains a path $(s, \ldots, c',c, a,d, \ldots, e,a)$.  Moreover, $t$ is not toppy, there exists no vulnerable triangle $t' \neq t$ sharing an edge with $t$ such that $h(t')$ is incident to $a$ and $S$ contains a blossom $B(a)$ with a base in $x^a_{ac}$ and with edges $(a,d), (a,e)$.
\end{lemma}

\begin{lemma} \label{triangle0top}
Let $t=(a;b,c)$ be a triangle of type $0$. If $S$ contains all edges of $t$, then they are contained in one blossom $B$ of the form $(a,b, \ldots, b,c, \ldots, c,a)$ with a base in $a$. If $t$ has an impassable hinge, then it is incident to $a$. If $t$ shares an edge with a triangle $t'=(c';a,c)$ of type $2$, then $t$ is not vulnerable.
\end{lemma}


\begin{figure}[h]
\centering{\includegraphics[scale=0.7]{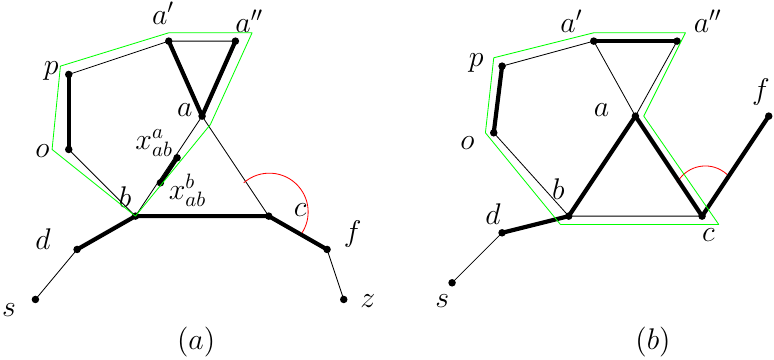}}
\caption{\small In both cases $S$ contains a subblossom $B_1=(a,a',a'',a)$. $S$ also contains a blossom $B=(b, B_1, p, o, b)$ in case (a) and a blossom $B=(b,c,B_1,p,o,b)$ in case (b).  In case (a) the subblossom $B_1$ has its base in $x^a_{ba}$ and  an amenable path from $s$ to $f$ has the form $(s,d,b,o,p,a',a,x^a_{ac}, x^c_{ac},c,f)$. (Note that it does not go through the base of $B_1$). Assume that in case (b) the triangle $t=(a;b,c)$ consists of edges $(a_1,b_1), (a_2, c_1)$ and $seg(b_1, c_1)$ in $G_2$. Then the base of subblossom $B_1$ is $a_2$
and $S$ contains even length paths $P_{a_2}=(s,d,b_2,c,a_2)$ and $P_{a_1}=(s,d,b_2,o,p,a',a'',a_2,c,b_1,a_1)$. Since $S$ also contains a path $P_{b_1}=(s,d,b_2,c,a,a'',a',a,b_1)$, the blossom $B$ is extended to the blossom $B_t$ containing all edges of $t$. An amenable path from $s$ to $f$ has the form $(s,d,b,o,p,a',a'',a_1,b_1,c,f)$ - it does not go through the base of $B_1$ and hence violates slightly the property required of paths $P_v$.
} \label{strukturaS} 
\end{figure}


\begin{lemma}\label{toplem}
If a triangle $t=(a;b,c)$ of type $2$ is toppy and there is no unprocessed blossom $B(v) \in \mathcal{E}(S)$ such that $v \in t$, then $t$ is not vulnerable.


\end{lemma}
\dowod
 Suppose that $M'$ contains edges $(a_1,b_1), (a_2, c_1), (b_2,d_1), (c_2,f_1)$. Let us consider the first   step in the algorithm (lines \ref{sig} and \ref{sig+}), after which $t$ becomes toppy.  After this step $S$  contains a path $P=(s,\ldots, x,a,c)$ (or $P=(s, \ldots, f,a,b)$). (Note, that  we extend $S$ by  $e^+$ and not only by $e$. Therefore, for example while adding $h(x,a,c)$ we would also add $(a_2,c_1)$ if it did not belong to $S$ already.) 

We can notice that in the case when none of the hinges $h(x,a,b), h(c,b,a), h(b,c,a)$ turns out to be impassable, the algorithm will extend $S$ so that it contains also the edges $(a,b), (b,c),h(c,b,a), h(b,c,a),(x^c_{bc}, x^b_{bc})$ (the last three edges form a segment $(b_1,c_1)$) and as a result $S^+ \cup h^+(c,b,d)$ will contain a  path $(s, \ldots, x,a,c,b,d)$ that clearly contains $h(c,b,d)$ and does not contain the other basic hinge of $t$ - $h(b,c,f)$. Note that a hinge $h(c,b,d)$ can be impassable only on $t$. Therefore, $h(c,b,d)$ will be included into $S$. A symmetric argument applies to $h(b,c,f)$. This shows that in this case, none of the basic hinges  of $t$ is vulnerable. 

Let us consider now the cases when some of the hinges $h(x,a,b), h(c,b,a), h(b,c,a)$ are impassable. We are going to show that in each of these cases $S$ will contain a path ending with $(x,a,c,b,d)$ (or a symmetric one ending with $(x,a,b,c,f)$) and from this point on, one basic hinge $h(c,b,d)$ of $t$ is clearly contained in $S$ 
and the other one $h(b,c,f)$ is either (i) not $S$-augmenting and at no further point will it become one  or (ii)  even if it  becomes  $S$-augmenting, it will never become vulnerable, because it will be safe.  

The hinge $h(c,b,a)$ may be impassable because of a triangle $t_1=(c;b,d)$ of type $1$,  a triangle $t_2=(d;b,c)$ of type $2$ or  a triangle $t_3=(b;c,i)$ of type $0$, see Figure \ref{zawiasywewfig}. If $h(c,b,a)$ is impassable on $t_3$, then $S$ contains a path $(s, \ldots, d,b,i,j, \ldots, g,i,c,f,\ldots,a,c,b)$, which means that
$S$ contains the hinge $h(c,b,d)$ and the other hinge $h(b,c,f)$ of $t$ is not $S$-augmenting. 
If $h(c,b,a)$ is impassable on $t_1$, then the algorithm realizes this when $S$ contains a path $P=(s, \ldots, i,d,c,f,\ldots, a,c)$ and a segment $\sigma=(c_1, b_1)$ is examined during processing of the blossom  $B(c_1)$ (in fact, by Lemma \ref{triangle1} $B(c_1)$ contains also $c_2$ and its base is $(x^c_{dc})$). During processing of $B(c_1)$ a segment $\sigma=(c_1, b_2)$ is also going to be examined and as a result $S$  will be extended by $\sigma^+=(c_1, b_2)^+$, because clearly a path $P'=(s, \ldots, i,d,c,f,\ldots, a,c,b,d)$ is amenable. (Note that it might happen that the algorithm added the hinge $h(b,c,f)=(x^c_{bc}, c_2)$ while processing $B(c_2)=B(c_1)$ and then $S \cup h(c,b,d)$  contains a path $(s, \ldots, d,c,a, \ldots, f,c,b,d)$ which is not amenable on $t$. However, $S^+ \cup h(c,b,d)$ contains also a path $(s, \ldots, d,c,f, \ldots, a,c,b,d)$ which is amenable on $t$. Although, it is not necessary, in this case we can replace $h(b,c,f)$ with $h(b,c,a)$ in $S$.) This means that the hinge $h(b,c,d)$ is included in $S$. As for the hinge $h(b,c,f)$, we can notice that at no point  will it become $S$-augmenting. If $h(c,b,a)$ is impassable on $t_2$, then the algorithm realizes this when $S$ contains a path $P=(s, \ldots, a,c)$ and a segment $\sigma=(c_1, b_1)$ is examined during processing of the blossom  $B(c_1)$. When the algorithm examines $\sigma=h(c,b,d)$, $S$ will be extended by $h(c,b,d), (b,d)$ and hence a path $P'=(s,\ldots,a,c,b,d)$ will be contained in $S$. From this time on, the hinge $h(c,b,d)$ belongs to $S$. If the hinge  $h(b,c,d)$  becomes $S$-augmenting at some later point, $x^c_{bc}$ will have to belong to $\mathcal{E}(S)$, which means that $b_2$ also belongs to $\mathcal{E}(S)$. And if $b_2 \in \mathcal{E}(S)$, then it means that $b_2$ belongs to a blossom $B(b_2)$ that contains edges $(b,d), (b,c), (c,a)$.  Hence, at such a point the hinge $h(b,c,d)$ will be safe (and will be unable to become vulnerable.)


The hinge $h(x,a,b)$ may be impassable because of a triangle $t_1=(x;a,c)$ of type $1$ or because of a triangle $t_2=(c;a,f)$ of type $2$, see Figure \ref{zawiasytopfig}. (If hinge $h(x,a,b)$ were impassable on a triangle of type $0$, then $(a;b,c)$ could not be toppy.)  By the above, we can also assume that none of the hinges $h(b,c,a), h(c,b,a)$ is vulnerable and hence none of them is removed. 
In the first case $S$ contains a path $P_1=(s, \ldots, f,c,x, \ldots, x, a,c)$ and thus $c_1 \in \mathcal{E}(S)$. This means that $\sigma=(c_1, b_2)$ is examined during 
processing of $B(c_1)$. As a result $\sigma^+=(c_1, b_2)^+$ is added to $S$ and thus the hinge $h(c,b,d)$ is contained in $S$. We also observe that the hinge $h(b,c,f)$ will not become $S$-augmenting at any later point of the algorithm, because of the following. The hinge $h(b,c,f)$ could become vulnerable only if  $x^c_{bc}\mathcal{E}(S)$ and this could happen if $b_2 \in \mathcal{E}(S)$, which in turn implies that the edge $(b,d)$  would have to belong to blossom $B(b_2)$.  However, any blossom $B(b_2)$ also contains the edges $(x,a), (a,c), (c,b)$, which means that $h(b,c,f)$ is not $S$-augmenting.  In the case when $h(x,a,b)$ is impassable because of a triangle $t_2=(c;a,f)$ of type $2$ (the hinge $h(x,a,b)$ denotes then $h(f,a,b)$), $S$ contains a path $P_2=(s, \ldots, i,f,a,c)$,  which will be extended along the edges $(c,b)$ and $(b,d)$ during processing of $B(c_1)$ and examining the segment $\sigma=(c_1, b_2)$.  As a result, the hinge $h(c,b,d)$ is included in $S$. The algorithm may examine the other hinge $h(b,c,f)$
when processing a blossom $B(b_2)$.  Such $B(b_2)$ will then contain also the edges $(b,c), (c,a), (a,f)$, see Figure \ref{2triangles2}. It may happen that $h(b,c,f)$ is impassable but it will never become vulnerable because it will be safe. \koniec

\begin{figure}[h]
\centering{\includegraphics[scale=0.7]{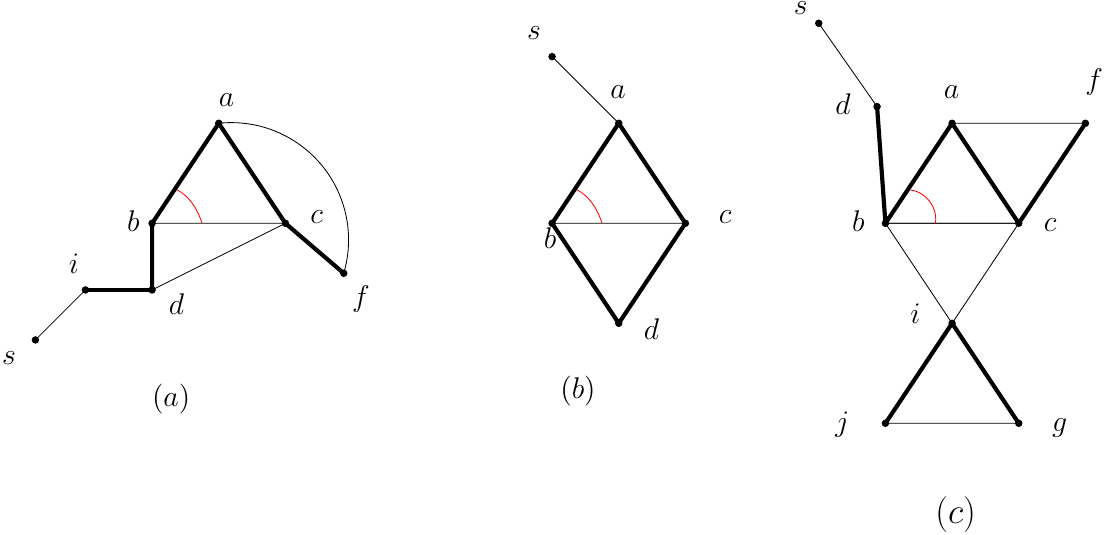}}
\caption{\small In all cases the hinge $h(c,b,a)$ is impassable.
} \label{zawiasywewfig}
\end{figure}

\begin{figure}[h]
\centering{\includegraphics[scale=0.7]{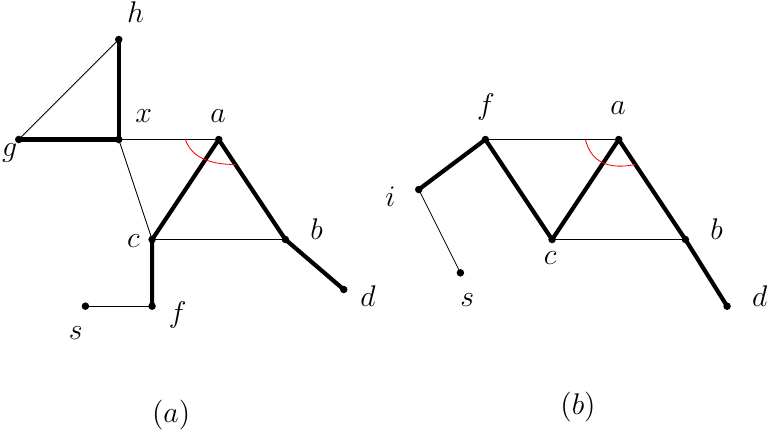}}
\caption{\small In case (a) hinge $h(x,a,b)$ is impassable and in case (b) hinge $h(f,a,b)$ is impassable. } 
\label{zawiasytopfig}
\end{figure}

\begin{figure}
\centering{\includegraphics[scale=0.8]{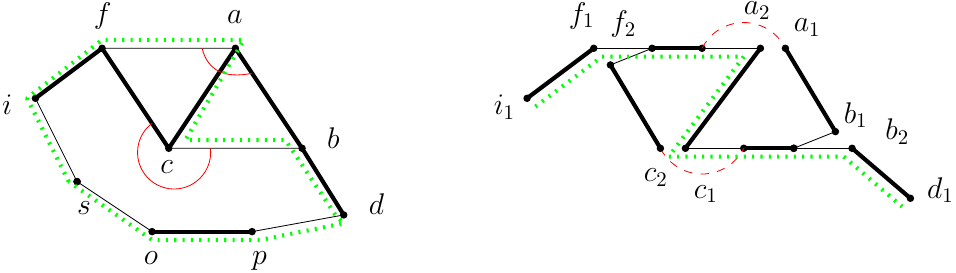}}
\caption{The dotted edges form a blossom $(s, o, p, d,b,c,a,f,i,s)$. The hinges $h(b,c,f), h(f,a,b)$ are impassable but safe, therefore not vulnerable. Safe hinges are proved to never increase the alternating $s$-reachability.
} \label{2triangles2}
\end{figure}

In the following three lemmas we prove that vulnerable triangles are edge-disjoint throughout the execution of the algorithm.

\begin{lemma}\label{overlaptr0}
Let $t=(a;b,c)$ be a triangle of type $0$ that has an impassable hinge at some point and shares an edge with a triangle $t' \neq t$.
Then $t'$ is not vulnerable at this or at any later point of the algorithm.
\end{lemma}
\dowod
 Since $t$ has an impassable hinge, $S$ contains a path $P=(s, \ldots, f,a,c,c', \ldots, c'',c,b,b', \ldots, b'',b,a)$ which does not contain $(a,d)$.
Suppose that $t'$ is of type $1$, see Figure \ref{triangle0overlap}. In all such cases $t'$ is toppy and hence cannot be vulnerable.

Suppose next that $t'$ is a triangle of type $0$. Then the only possibility for $t'$ to have an impassable hinge is when $t$ and $t'$ have common top vertices. Hence, $t'=(a;b,d)$. (The case when $t'=(a;c,d)$ is symmetric.) If $t'$ has a vulnerable hinge, then $S$ contains a path $P'=(s, \ldots, f,a,d,d', \ldots, d'',d,b,b', \ldots, b'',b,a)$. But the existence of $P$ and $P'$
in $S$ implies that $S$ also contains a path \\
$P''=(s, \ldots,  f,a,c,c', \ldots, c'',c,b,b', \ldots, b'',b, d, d', \ldots, d'',d, a)$, which proves that both $t$ and $t'$ is toppy. A toppy triangle of type $0$ has by  definition no vulnerable hinge.

The  case  when $t'$ is of type $2$ has already been considered in Lemma \ref{triangle0top}.
\koniec

\begin{figure}[h]
\centering{\includegraphics[scale=0.7]{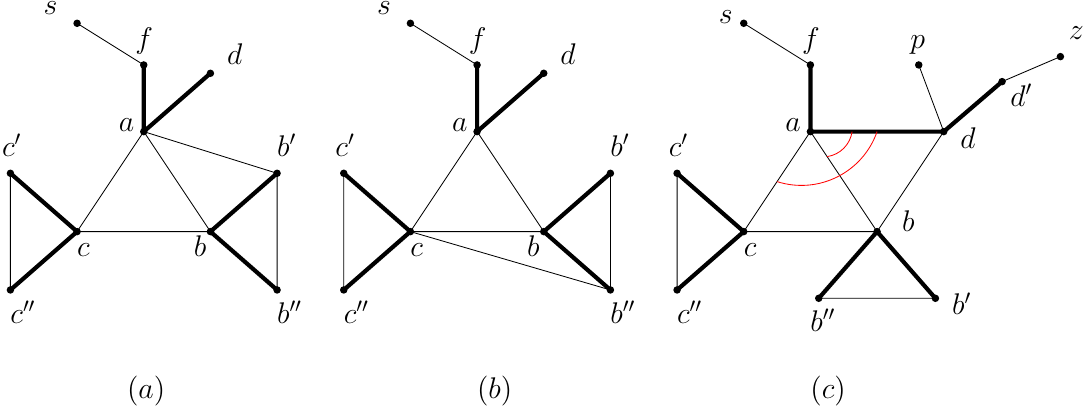}}
\caption{}
\label{triangle0overlap}
\end{figure}

\begin{observation}\label{zawiasy}
Let $t=(a;b,c)$ be a triangle of type $2$ and $(c,f) \in M, (f',c) \notin M$ edges outside of $t$. Then $h(f',c,f)$ is not a hinge of any triangle of type $2$.
\end{observation}
\dowod
If $h(f',c,f)$ were a hinge of a triangle $t'$ of type $2$, then $t'$ would have to have the form $(a;c,f')$, but since $f' \neq b$, such $t'$ cannot occur in $G$.
\koniec

\begin{lemma}\label{overlaptr1}
Let $t$ be a triangle of type $1$ that shares an edge with a triangle $t' \neq t$ of type $1$ or $2$. Suppose that $t$ is vulnerable.
Then $t'$ is not vulnerable at this or at any later point of the algorithm.
\end{lemma}

\dowod Suppose that $t=(x;a,c)$ and $t'=(a;b,c)$ is of type $2$ and that $(b,d), (c,f), (x,g), (x,h)$ are edges of $M \setminus (t \cup t')$, see Figure \ref{zawiasytopfig}(a). If $t$ is vulnerable, then it has an impassable hinge and it is either $h(x,a,b)$ or $h=(x,a,c)$ and hence, $S$ contains one of the paths: $P_1= (s, \ldots, f,c,x,g, \ldots, h,x,a)$ or $P_2=(s, \ldots, b',b,a,x,h, \ldots, g,x, c)$. If $S$ contains $P_1$, then $t'$ is toppy and by Lemma \ref{toplem}  not vulnerable. Let us now assume that $S$ contains $P_2$ and that the edges $(b,d), (c,f)$ correspond to the edges $(b_2, d_2), (c_2, f_2)$ of $M'$. 
Before $t'$ can become vulnerable, $\sigma=(b_2, c_2)$ has to be $S$-augmenting. Let us notice that since the edge $(b_1, a_1)$ is contained in $S$, the edge $(b_2, d_2)$ is also in $S$ and $b_2 \in \mathcal{O}(S)$. This is so because $(b_1, a_1)$ was added while examining some segment $\sigma=(u_i, b_1)$ during processing of $B(u_i)$ 
and segment $\sigma=(u_i, b_2)$ would have been considered then too. The hinge $h(u_i,b_2, d_2)$ is not a hinge of any triangle of type $2$ by Observation \ref{zawiasy}.
We can also assume that $t$ is the first vulnerable triangle of type $1$ that shares an edge with $t'$, hence $h(u_i,b_2, d_2)$ is also not a vulnerable hinge of a triangle of type $1$. Clearly, it cannot be a vulnerable hinge of a triangle of type $0$ either. This means that indeed $(b_2, d_2) \in S$.
Once $b_2 \in \mathcal{E}(S)$, segment $\sigma=(b_2, c_1)$ will be examined and $\sigma^+=(b_2, c_1)^+$ will be added to $S$. After processing $B(a_1)$ (which is a singleton), $S^+$ contains an amenable path $(s, \ldots, d,b,c,a,x , \ldots, x,c,f)$. Note that the hinge $(a_1, x^a_{ax})$ is not contained in $S$. Therefore $t'$ ceases to be vulnerable (and $t$ does not become one).

Suppose next that $t=(c;b,d)$ and $t'=(a;b,c)$ is of type $2$, see Figure \ref{zawiasywewfig}(a). 
 The  case when the basic hinge $h(c,b,a)$ of $t$ is vulnerable has already been considered in Lemma \ref{toplem}. The second when the second basic hinge $h(c,d,i)$ of $t$ is vulnerable means that $t'$ is toppy and by the same lemma, not vulnerable.

Let us consider next the cases when $t=(a;b,c)$ and $t'$ is of type $1$. Suppose first that $t'=(d;b,c)$. See Figure \ref{triangle1overlap}(a). Let 
$(b,b'), (f,c), (a,a'), (a,a''), (d,d'), (d,d'')$ denote edges of $M \setminus (t \cup t')$.
Since $t$ has an impassable hinge, then  $S$ contains exactly one of the paths
$P_1=(s, \ldots, b',b,a,a', \ldots, a'',a,c)$ or $P_2=(s, \ldots, f,c,a,a', \ldots, a'', b)$ (the order of $a', a''$ is arbitrary - $P_2$ might also have the form: $(s, \ldots, f,c,a,a'', \ldots, a', b)$). Assume that it is $P_1$.  If $t'$ is to have an impassable hinge too, then  $S$ would have to contain a path $P'=(s, \ldots, b',b,d,d', \ldots, d'',d,c)$. (Note that $t'$ cannot have an impassable hinge because of  a path $P''=(s, \ldots, f,c,d,d'',d',b)$.) By combining these two paths via $(b,c)$ we get an amenable path $(s, \ldots, b',b, a,a', \ldots, a'',c,b,d,d', \ldots, d'', d,c,f)$ that contains the hinge $h(d,c,f)$. This shows that if $S$ contains both $P_1$ and $P'$, then neither $t$ nor $t'$ is vulnerable.

Suppose next that  $t'=(a;c,f)$ and let $(b,b'), (f,o), (a,a'), (a,a'')$ denote edges of $M \setminus (t \cup t')$. See Figure \ref{triangle1overlap}(b). Suppose that $S$ contains $P_1$. Then $t'$ cannot have an impassable hinge, because  $S$  contains a path $(s, b', b, a, a', \ldots, a'', a, f,o)$.
The cases when $t'=(b;a,a')$ or when $t'=(c;a,a')$ have already been considered in Lemma \ref{triangle1}.
\koniec

\begin{figure}[h]
\centering{\includegraphics[scale=0.7]{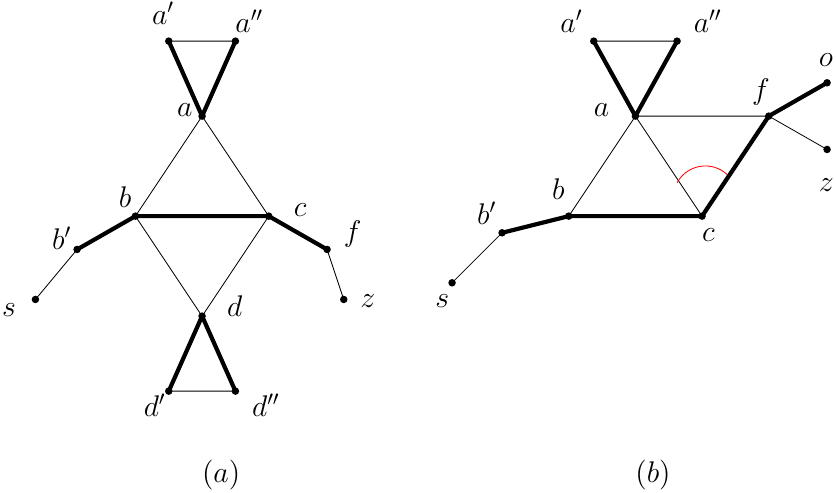}}
\caption{}
\label{triangle1overlap}
\end{figure}



\begin{lemma}\label{overlaptr2}
Let $t$ be a triangle of type $2$ that shares an edge with a triangle $t' \neq t$ of type  $2$. Suppose that $t$ is vulnerable at some point.
Then $t'$ is not vulnerable at this or at any later point of the algorithm.

\end{lemma}

This follows from the proof of Lemma \ref{toplem}.

How can we efficiently recognize that a triangle $t=(a;b,c)$ of type $1$ or $2$ is toppy? We can observe that if $S$ contains an odd/even alternating length path from $s$ to $a$, then $a$ belongs to $\mathcal{O}(S)$ or $\mathcal{E}(S)$, respectively. On the other hand, the top vertex $a$ of $t$ may belong to corr. $\mathcal{O}(S)$ or $\mathcal{E}(S)$ when $t$ is not toppy - such cases happen when $a$ is contained in some non-singleton blossom. Let $B$ denote the most outer blossom containing $a$. Then it holds, that if $B$ does not contain any edge of $t$, then $t$ is toppy only if $B$ (as a shrunk vertex) belongs to appropriately $\mathcal{O}(S)$ and $\mathcal{E}(S)$. If $B$ contains at least one edge of $t$,
then $a$ belongs to $\mathcal{E}(S) \cap \mathcal{O}(S)$ (because each vertex of a blossom belongs to $\mathcal{E}(S) \cap \mathcal{O}(S)$) but $t$ sometimes is toppy and sometimes not.
It turns out, however, that a blossom $B$ that contains at least one edge of a vulnerable triangle $t$ of type $1$ or $2$ has a rather special form described below in Lemma \ref{speciallem}.

\begin{lemma}\label{speciallem}
Let $t=(a;b,c)$ be a vulnerable triangle of type $1$ or $2$ with   $h(t)=h(a,c,f)$  for $t$ of type $1$ and $h(t)=h(b,c,f)$ for $t$ of type $2$. Let $B$ be    a blossom containing at least one edge of $t$.  Then $B$ consists of all edges of $t$. Moreover, the base of $B$ is $B(b)$. 
\end{lemma}

\dowod
Let $(c,f), (b,d)$ denote edges of $M \setminus t$ which correspond to edges $(c_2,f_i), (b_2, d_j)$ of $M'$. Assume also that the counterparts in $M'$ of edges $(a,b),(a,c)$ are the edges $(a_1, b_1), (a_2, c_1)$.

 Suppose first that $t$ is of type $2$.  The blossom $B$ cannot contain only the edge $(b,c)$ of the edges of $t$, because then it would have to go through the hinge $h(b,c,f)$, which is not present in $G_2$. $B$ may not contain only one of the edges $(a,b), (a,c)$ (and additionally, possibly $(b,c$) because then $t$ would have to be toppy and hence by Lemma \ref{toplem} not vulnerable. If, on the other hand, $B$ contains both edges $(a,b), (a,c)$ but not the edge $(b,c)$, then $t$ cannot be vulnerable. This follows
from the fact that such $B$ has form $(x, \ldots, d',b,a, \ldots, a,c, f', \ldots, x)$ and by Lemma \ref{zawiasy} the hinges $h(d',b,d)$ and $h(f',c,f)$ are not impassable. Therefore, we can extend $S$ so that it includes the edges $(b,d), (c,f)$ and the hinges $h(d',b,d), h(f',c,f)$, which means that none of the hinges $h(c,b,d), h(b,c,f)$ is $S$-augmenting. Also, at no further point of the algorithm will any of these two hinges be $S$-augmenting, because at the point when  both $b_2$
and $c_2$ belong to $\mathcal{E}(S)$ (this is the only possibility for $seg(b_2,c_2)$ to be $S$-augmenting), the edges $(d,b), (c,f)$ are contained in $B$, therefore $seg(b_2,c_2)$ is not $S$-augmenting.

Hence, the only form $B$ may take is: $B=(x, \ldots, d', b,a, v, \ldots, v', a, c,b,d, \ldots, x)$, where $x$ is the base of $B$. We claim that the beginning part of $B$ from $x$ to $b$ going through $(d',b)$ and the last part starting from the edge $(b,d)$ and ending on $x$ form a subblossom $B_1$ - this follows from the fact that $(d,b)$ and $(d',b)$ cannot  form an impassable hinge $h(d',b,d)$ by Lemma \ref{zawiasy}. On the other hand, the part of $B$ from $a$ to $a$ forms another blossom $B_2$ (whose base is either $a_1$ or $a_2$, depending on which of them was processed first), which follows from the fact none of the hinges $h(v,a,b), h(v',a,c)$ can be vulnerable because it would mean that there exists a vulnerable triangle $t'$ sharing an edge with $t$. (Note, however, that it may happen that there exists a vulnerable triangle $t'=(a;u, u')$  of type $0$ but then either hinges $h(u,a,b),  h(u',a,b)$ are vulnerable or hinges $h(u,a,c), h(u',a,c)$ and not for example, $h(u,a,b), h(u',a,c)$, which would prevent forming a blossom $B_2$.)

The reasoning for a triangle of type $1$ is analogous.
\koniec

A blossom described in Lemma \ref{speciallem} is called {\bf \em special}. An example of a special blossom is depicted in Figure \ref{specialblossomex}.

\begin{figure}[h]
\centering{\includegraphics[scale=0.7]{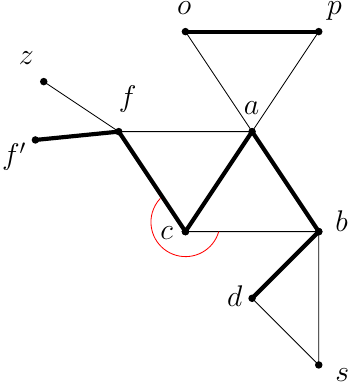}}
\caption{\small A special blossom $B$ is formed by edges of $t=(a;b,c)$, edges of $t_1=(s;b,d)$ forming a subblossom $B_1$ and edges of $t=(a;o,p)$ forming a subblossom $B_2$. The base of $B_2$ is either $a_1$ or $a_2$, depending on which of them was processed first.  $B_1$ is the base of $B$.}
\label{specialblossomex}
\end{figure}

It turns out  that we do not have to form special blossoms in $S$ to preserve its properties.

\begin{lemma} \label{special}
Let $B$ denote a special blossom in $S$ on $t=(a;b,c)$ of type $1$ or $2$, whose base is a blossom containing $b$ and  let  $(b,d)$ and $(c,f)$ denote edges in $M \setminus t$ and $(b,d')$ an edge not in $M$.  Let $h(B)$ denote the following hinge:

\begin{enumerate}
\item If $t$ is of type $2$, then $h(B)=h(c,b,d)$ unless $G$ contains a triangle $t'=(c;a,f)$ of type $2$, in which case $h(B)=h(d',b,a)$.
\item If $t$ is of type $1$, then $h(B)=h(a,b,d)$.

\end{enumerate}

Then $h(B)$ does not increase the alternating $s$-reachability at any further point and is not needed as long as $t$ is vulnerable.

Also, for any vertex $v \in V' \cap \mathcal{E}(S)$ it holds that if $S$ contains an amenable path $P_{v'}$, then so does $S \setminus h(B)$.
\end{lemma}
\dowod
Assume first that $t$ is a triangle of type $2$. Suppose that the edges of $M' \cap t$ are $(a_1, b_1), (a_2, c_1)$. If $G$ does not contain $t'=(c;a,f)$ of type $2$, then we remove $h(c,b,d)$ from $S$ (but not $S^+$) and if  $bs(B(a))=a_2$, we change $bs(B(a))$ to $a_1$. The only thing that changes, apart from the disappearance of the blossom $B$, is that now $a_2 \notin \mathcal{E}(S)$. If $G$ does contain a triangle $t'=(c;a,f)$ of type $2$ (see Figure \ref{specialblossomex}), then we remove $h(d',b,a)$ from $S$ and if  $bs(B(a))=a_1$, we change $bs(B(a))$ to $a_2$. After these modifications $a_1 \notin \mathcal{E}(S)$. The reason for why we remove $h(d',b,a)$ in this case and not $h(c,b,d)$ is that if we removed $h(c,b,d,)$, then $t'$ would stop being toppy and $S$ would not contain an amenable path to $f'$, where $(f,f')$ is an edge of $M \setminus t'$.


Suppose now that $t$ is a triangle of type $1$.  Suppose that the edge of $M' \cap t$  $ (b_1, c_1)$.We remove $h(a,b,d)$ from $S$ (but not $S^+$) and if  $bs(B(a))=x^a_{ac}$, we change $bs(B(a))$ to $x^a_{ab}$. After these modifications $b_1 \notin \mathcal{E}(S)$ but $b_2$ still belongs to  $\mathcal{E}(S)$.

{\em Remark: } If $t$ is of type $2$ and $h(B)=h(c,b,d)$, then we may need $h(B)$ later, when the algorithm discovers a triangle $t'=(f;a,c)$ of type $1$, whose basic hinge $h(f,c,b)$ is $S$-impassable  - since $t$ and $t'$ share an edge, we know by Lemma \ref{overlaptr1} that they cannot be  vulnerable. 

\koniec

We assume that we do not form special blossoms in $S$ as described in Algorithm \ref{algnew}.


In Lemma \ref{triangle0top} we have observed that the edges of a vulnerable triangle $t=(a;b,c)$ of type $0$ are contained in one blossom $B$. We also call a blossom $B$ {\bf \em special} (on $t$).

We have the analogue of Lemma \ref{special}:

\begin{lemma}\label{special0}
Let $B$ be a special blossom on a vulnerable triangle $t=(a;b,c)$ of type $0$ such that $h(t) =h(b,a,e)$. Let $(b_1,b'_1), (b_2, b'_2) \in M'$. After the removal of hinges $h(c,b,b'_1)$ and $h(c,b,b'_2)$ from $S$ (but not from $G_2$), the alternating $s$-reachability of $S$ does not change. The hinges $h(c,b,b'_1)$ and $h(c,b,b'_2)$ are  denoted as $h_1(B), h_2(B)$.
\end{lemma}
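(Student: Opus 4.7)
The plan is to follow the template of Lemma~\ref{special}: I use the structure of the special blossom $B$ to exhibit, for every even-length alternating $s$-path in $S$ that uses one of the two removed hinges, a substitute $s$-path in the modified $S$ that ends at the same vertex. By Lemma~\ref{triangle0top} the three edges of $t$ all lie inside $B$, whose base is a copy $a_1$ of $a$, and therefore $B$ contains both copies of each of $a,b,c$ together with the $G_2$-counterparts of all three non-$M$-edges of $t$. In particular, at the moment the hinges $h(c,b,b_1')$ and $h(c,b,b_2')$ are deleted from $S$, the half-edges $(x^b_{ba}, b_1)$ and $(x^b_{ba}, b_2)$ (i.e., the hinges $h(a,b,b_1')$ and $h(a,b,b_2')$) are still present in $S$, so $b_1$ and $b_2$ remain reachable from $s$ by an even-length alternating path that enters $b_j$ from $x^b_{ba}$ rather than from $x^b_{bc}$.

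The second step is to argue that no other vertex leaves $E(S)$ and no augmenting path is destroyed. I would take an arbitrary even-length alternating $s$-path $P$ in $S$ that traverses a deleted half-edge $(x^b_{bc}, b_j)$, locate the segment of $P$ lying inside $B$, and reroute inside $B$ along its subblossom cycle so that $b_j$ is entered through $x^b_{ba}$ instead of $x^b_{bc}$. The tail of $P$ starting with the $M'$-edge $(b_j, b_j')$ is copied verbatim. This local surgery yields an even-length alternating $s$-path in the modified $S$ to the same endpoint, and the identical argument adapts augmenting paths.

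The main obstacle is to verify that the rerouted path is indeed alternating and lies inside $G_2[S]$, regardless of which copy $b_j$ is entered and of which portion of $B$ the original $P$ has already traversed. This reduces to a case check on the cyclic structure of $B$ as produced in the proof of Lemma~\ref{triangle0top}: since $B$ is an odd-length alternating cycle containing the full $G_2$-expansions of all three edges $(a,b),(b,c),(c,a)$, from every $E$-vertex of $B$ both copies of $b$ are reachable by an even-length alternating path that avoids the half-edges $(x^b_{bc}, b_1)$ and $(x^b_{bc}, b_2)$, so a suitable substitute always exists, and the $s$-reachability of $S$ is unchanged.
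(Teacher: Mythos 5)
Your proposal is correct and follows essentially the same route as the paper, which itself only states that the proof is ``similar to the proof of Lemma~\ref{special}'': you identify exactly which alternating paths are destroyed by deleting the two half-edges $(x^b_{cb},b_1),(x^b_{cb},b_2)$ and substitute for them using the structure guaranteed by Lemma~\ref{triangle0top} (all of $t$ inside one blossom based at a copy of $a$), rerouting so that each copy of $b$ is entered through the edge $(a,b)$ instead of $(b,c)$. This is precisely the ``what is lost versus what still reaches the same endpoint'' case analysis that the paper carries out for Lemma~\ref{special}, so no further comparison is needed.
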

The proof is similar to the proof of Lemma \ref{special}.

\begin{lemma} \label{safe}
A safe hinge never increases the alternating $s$-reachability.
\end{lemma}
The proof is given in Section \ref{deferred}.

\section{Correctness and running time} \label{corr}
While examining whether a hinge $h(u,v,w)$ is impassable, we only verify if its addition does not create a path non-amenable on $t$ containing $(u,v)$. Hence, such verification is very local and there hypothetically exists a risk that this hinge is part of a path non-amenable on a triangle $t'$ that does not contain $(u,v)$.
Below we prove that all paths $P_v$ associated with the structure $S$ are amenable (on every triangle of $G$.)

\begin{lemma}
If a vertex $v$ belongs to $\mathcal{E}(S)$, then $S$ contains an amenable even length path $P^A_v$ from $s$ to $v$. 

\end{lemma}

\dowod
The only possibility for an even length alternating path $P_v$ contained in $S$ to be non-amenable would be if there existed a triangle $t$ in $G$ and $S$ contained two  paths \\
$P_ {v_1}, P_{v_2}$ such that $M \oplus (P_{v_1} \cup P_{v_2})$ is non-amenable on $t$. Note that these paths would have to end with paths $P'_1, P'_2$ (where $P'_i\ $ ends with $v_i$) such that $P'_1$ and $P'_2$ start and end with edges of $M$, are edge-disjoint and contain all edges
of $t\setminus M$.

Obviously such a triangle $t$ would have to be of type $1$ or $0$.
However, by Lemmas \ref{triangle0top} and  \ref{triangle1}, it is impossible. Note that by Lemma \ref{triangle0top}, any two paths containing different sets of edges of $t$
share at least one edge of $M$ - the one incident to the base $a$ of the blossom containing the edges of $t$. Similarly,  for a triangle $t=(a;b,c)$ of type $1$ it holds by Lemma \ref{triangle1} that any two paths containing edges $(a,b)$ and $(a,c)$ respectively and not containing $(b,c)$ share a matching edge incident to $a$.
Also, while reorganizing blossoms and hence attaching paths to a different copy of top vertex of a triangle, we might spoil amenability of some paths. However, 
this is prevented by two choices of reorganization, i.e., by two choices of a hinge $h(B)$.
\koniec

\begin{figure}[h]
\centering{\includegraphics[scale=1]{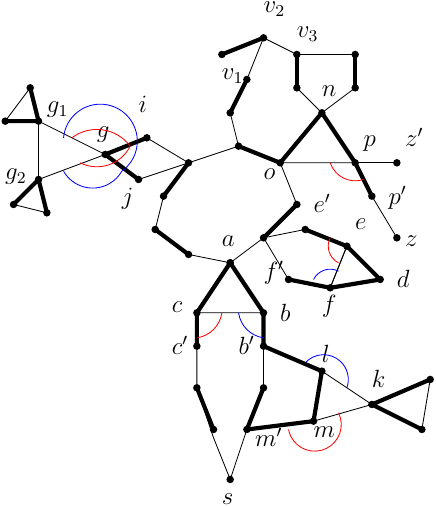}}
\caption{Vulnerable triangles: $t_1=(a;b,c), \ t_2=(d;e,f), \ t_3=(n;o,p), \ t_4=(k;l,m), \ t_5=(g;g_1,g_2)$. The vulnerable hinges of these triangles are drawn in red. Hinges drawn in blue are alternative possibilities of vulnerable hinges, i.e., $G_2$ contains either red hinges and then blue ones are vulnerable or the other way around. Triangle $t_5$ has two vulnerable hinges $h(g_1,g,i)$ and $h(g_2,g,i)$, or alternatively: $h(g_1,g,j)$ and $h(g_2,g,j)$. 
The graph $G$ contains two augmenting paths from $s$  - one infeasible  to $z$ and the other amenable to $z'$.}
 \label{bigexample}
\end{figure}

\begin{theorem}
If $G$ contains an amenable augmenting path  starting at $s$, then at the end of Algorithm \ref{algnew} structure $S$ contains a path starting at $s$ and ending in a vertex $v \in V_s$. 
\end{theorem}
\dowod
Suppose to the contrary that $G$ contains an amenable augmenting path $P$ starting at $s$ but $S$ contains no  path starting at $s$ and ending in a vertex in $V_s$.
This means that $ S^+$ also does not contain any alternating path starting at $s$ and ending in $V_s$.
Let $P$ be an amenable augmenting path from $s$ to $v \in V_s$ contained in $\bar{G'}$. 

\begin{claim} W.l.o.g. we may assume that if $P$ goes through a vulnerable  hinge of a  triangle $t$ of type $1$ or $2$, then $P$  goes through $t$  only once. 
\end{claim}
\dowod 
Recall that a path is said to go through an edge $(u,v) \notin M$ if it contains $(x^u_{uv}, x^v_{uv})$.
Let $t=(a;b,c)$ be a triangle of type $1$ or $2$ such that $P$ goes through a basic hinge of $t$ and goes through $t$ twice. Then $P$ has one of the forms:
(i) $(s, \ldots, b',b,a, a',\ldots, b'', b,c,a, a'', \ldots)$, (ii) $(s, \ldots, b',b,a,a', \ldots, a'', a,c,b, v, \ldots)$ or symmetric ones. Then $P'$ that has the form
(i) $(s, \ldots, b',b,a, a'', \ldots)$, (ii) $(s, \ldots, b',b,a, v, \ldots)$, correspondingly, goes through $t$ at most once and is amenable on each vulnerable triangle.
\koniec

For a triangle $t$ we define a {\bf \em pole} of $t$ as follows. If $t$ is of type $1$ or $2$, then $t$ has only one pole - its top vertex. If $t$ is of type $0$, then $t$ has two poles - each non-top vertex of $t$ is its pole.


\begin{claim}\label{claimcorr1}
Let $p_1$ be a pole of a vulnerable triangle $t_1$  and  $p_2$ a pole  of a vulnerable triangle $t_2 \neq t_1$. Then $p_1 \neq p_2$ and in $G_2$ there is no path $P$ from $p_1$ to $p_2$ that does not go through any edge of $t_1 \cup t_2$.
\end{claim}
\dowod 
From Lemmas proving that any two vulnerable triangles are edge-disjoint we get that if both $t_1$ and $t_2$ are of type $0$ or $1$, then $p_1 \neq p_2$. 
In $G$ there cannot exist two different triangles of type $2$ sharing the top (because then this top vertex would have to be incident to at least three edges of $M$.)
If $t_1$ is of type $2$ and $t_2$ is of type $0$ or $1$, then they cannot share a pole either, because if $t_2$ is of type $1$, then by Lemma \ref{triangle1}, the edges of $M \cap t_1$ would belong to a blossom (and no edge of a vulnerable triangle is contained in a blossom) and if $t_2$ is of type $0$, then it would imply that $t_1$ is toppy.

Suppose to the contrary that  a path $P$ from $p_1$ to $p_2$ from the claim is contained in $G_2$. 
The existence of such path from $p_1$ to $p_2$ would mean that $G_2$ contains an edge with both endpoints in $\mathcal{E}(S)$, which is neither a safe edge nor a special edge and hence that Algorithm \ref{algmain} failed to form some blossom (which is neither a special blossom nor a blossom arising only due to a safe edge) - a contradiction. \koniec

Recall that by Lemmas \ref{speciallem} and \ref{special0} no edge belonging to a vulnerable triangle belongs to a blossom of $S$.

If $P$ does not go through any hinge that is removed in $G_2$, it is wholly contained in $ S^+$. Since $S$ has the same alternating $s$-reachability as $ S^+$, it follows that $S$ contains a path from $s$ to $v$. Suppose then that $P$ goes through $k$  hinges  $h(t_1), \ldots, h(t_k)$, each of which is vulnerable - in this order, starting from $s$.  Each of these hinges is absent in $G_2$. The  part of $P$ from $s$ to $h(t_1)=h(u_1,v_1,w_1)$ (by definition $(u_1,v_1) \notin M$ and $(v_1,w_1) \in M$) is contained in $ S^+$. Note that $v_1$ is not a pole of $t_1$.
Since $t_1$ is vulnerable, it is not toppy in $G_2$. Therefore and because $P$ goes through $t$ of type $1$ or $2$ only once, the hinge $h(t_1)$ is traversed on $P$ in the order $w_1, v_1, u_1$ and not $u_1, v_1, w_1$. Hence, $w_1 \notin t_1$.   Let $P_1$ be the maximal part of $P$  starting at $v_1$ and containing  an edge of $t_1$.  Then, because $P$ is amenable, $P_1$ ends on a vertex  $v'_1$  that is a pole of $t_1$. (If $t_1$ is a triangle of type $1$, then already $u_1$ is a pole of $t_1$ and  $(v_1, u_1) \notin t_1$.
If $t_1$ is a triangle of type $2$, then $P$ must contain a subpath of the form $(w_1, v_1, u_1, v'_1)$, where $(u_1, v'_1) \in M \cap t_1$ - otherwise it would be non-amenable on $t_1$. Also, the vertex following $v'_1$ does not belong to $t_2$, because $P$ is $M$-alternating.)
The next time $P$ encounters a vulnerable hinge $h(t_2)=h(u_2,v_2,w_2)$, this hinge is also traversed on $P$ in the order $w_2, v_2, u_2$. This follows from the fact that otherwise $P$ would contain a path between $v'_1$ (a pole of $t_1$) and a pole of $t_2$, that is entirely contained in $S^+$, contradicting  Claim \ref{claimcorr1}. Note that we also use here the property that any two vulnerable triangles are edge-disjoint. (Otherwise, a path from $h(t_i)$ to $h(t_{i+1})$ might  first go through a pole of $t_{i+1}$ and later through a pole of $t_i$ and  then $h(u_{i+1}, v_{i+1}, w_{i+1})$ might be traversed in order $u_{i+1}, v_{i+1}, w_{i+1}$. For example, if $t_i=(a;b,c)$ and $t_{i+1}=(b;a,d)$ were vulnerable triangles of type $2$, then the path $(x,c,b,a,y)$ goes through the pole $b$ of $t_2$ and then through the pole of $t_1$ and thus $t_2$ would be toppy in the graph containing a restored $h(b,c,x)$.) Continuing in this manner we obtain that the last part of $P$ is a path $P_{k+1}$ between a pole  of $t_k$ and a vertex $v\in V_s$, which is entirely contained in $ S^+$, contradicting the fact that $ S^+$ contains no  vertex of $V_s$. 
\koniec

({\em Remark: } To prove tha above theorem, we can also use a weaker  Lemma $13$ from version $7$ of this paper on ArXiv which states  that if a a triangle-free $2$-matching $M$  is not maximum, then there exists an $M$-augmenting path $P$ that is amenable on every vulnerable triangle $t$. The proof of this lemma is  shorter than the proof of Theorem \ref{theorem:decomposition}.)

Let $n$ and $m$ denote the number of, correspondingly, vertices and edges in the graph $G$.

\begin{lemma} \label{runtime}
The running time of Algorithm \ref{algnew} is $O(m)$.
\end{lemma}
\dowod
The running time of Algorithm \ref{algnew} is basically the same as that of finding an augmenting path in non-bipartite graphs. 
We use data structures from  \cite{GabowTarjan} and generally the analysis from \cite{Gabow17} goes through.

Additionally, we need to be able to decide if a hinge $h(u,v,w)$ is impassable or not. 
How do we check it?
Assume that $(u,v)$ belongs to at least one triangle. Otherwise, $h(u,v,w)$ is clearly passable. We check if $S \cup h^+(u,v,w)$ contains a path $P$ non-amenable on a triangle $t$ containing $(u,v)$.  Notice  that checking if a given path is amenable on $t$ containing $(u,v)$ is easy, because 
if $P$ is non-amenable on $t$, then the last one, two or three edges of $E \setminus M$ of $P$ in $\bar{S} \cup h^+(u,v,w)$ ($\bar{S} - \ S$ with shrunk blossoms) belong to $t \setminus M$, depending on the type of $t$, by Lemmas \ref{triangle1} and \ref{triangle0top}.
If $S \cup h^+(u,v,w)$ contains a path non-amenable on a triangle $t$, then we check if $t$ is toppy or if there exists a vulnerable triangle $t'$ 
that shares an edge with $t$. By Lemmas \ref{overlaptr0}, \ref{overlaptr1} it would mean that $h(u,v,w)$ is passable. In the remaining case, $h(u,v,w)$ is indeed impassable and we remove it from $G_2$.
Thus, time needed to establish if a hinge is passable is constant. 

Similarly, time needed to carry out line \ref{spec2} is constant - we only need to restore a vulnerable hinge of $t$ if $t$ is toppy or shares an edge with another vulnerable triangle $t'$. (If a hinge stopped being vulnerable because it is not $S$-augmenting any more, then we do not need to restore it explicitly.) Restoring a vulnerable hinge consists in re-adding it to $G_2$. 

\koniec

\begin{corollary}
A maximum triangle-free $2$-matching of $G$ can be computed in $O(nm)$ time.
\end{corollary}
\dowod
We start from any maximal (w.r.t. inclusion) triangle-free $2$-matching $M$. While there exists an unsaturated vertex $v$, we run Algorithm \ref{algnew} to find an amenable augmenting path $P$ starting at $v$. If such a path exists, we augment $M$ by applying $P$ to it. If there is no amenable augmenting path starting at $v$, then there never will be, so $v$ will not be an endpoint of any other augmenting amenable path. Thus, in each iteration we either augment $M$ or convince ourselves that
the degree of $v$ in $M$ will not change. Clearly, there can be only $O(n)$ such iterations.
\koniec

The above approach could be modified so that it runs in $O(|M_{opt}| m)$ time, where $M_{opt}$ denotes a maximum triangle-free $2$-matching.

\vspace{1cm}

{\em Remarks about the  structure $S$} \\

If a vulnerable triangle $t$ at some point becomes toppy or is discovered as not vulnerable anymore because it shares an edge with a triangle $t'$ such that $S$ contains a path non-amenable on $t'$, then we sometimes reorganize $S$ (and $\bar{S}$) accordingly. (These modifications are not necessary though.) For example, in the case shown in Figure \ref{triangle1overlap} (a), the modified $S$ would contain a path $(s,b',b,b,a,a',a'',c,b,d,d', d'',d,c,f)$. Similarly, in the case shown in Figure \ref{zawiasytopfig} (a) at one point $S$ contains a path
$P_1=(s, f,c,b,a)$ and the hinge $h(c,b,d)$ is removed. After adding a path $(s,f,c,x,g,h,x,a,c)$ to $S$, $t$ is discovered as toppy and then we remove the hinge $h(b,c,f)$ from $S$ (but not from $G_2$) and instead of $P_1$ the modified structure $S$ contains a path $P'_1=(s,f,c,x,g,h,x,a,c,b,a)$. Also, of course a hinge $h(c,b,d)$ is restored. 

We also want to point out that in the case of (standard) matchings, the structure $S$ has the following property: for any even length path $P$ of $S$ starting at $s$, each of its even-length subpaths $P'$ is also contained in $S$.  In the case of triangle-free $2$-matchings this property does not hold any more.
This happens in the following situations.
If a vulnerable triangle $t$ of type $1$ or $2$ becomes toppy by being involved in a blossom, then a vulnerable hinge is restored because of a path $P_v$ in $S$ that is non-standard in this sense. Suppose that $S$ at some point contains a path $P_1=(s, f,c,b,a)$ and $t=(a;b,c)$ is a triangle of type $2$. Then a hinge $h(c,b,d)$ is removed. At some later point $S$ contains a blossom $B=(s,f,c,b,a,x,y,s)$ and $t$ becomes toppy. $S$ will now contain a path $(s,y,x,a,c,b,d)$ going through a restored hinge. However, this path is not an extension of a path $(s,f,c,x^c_{bc}, x^b_{bc})$ which is also contained in $S$.

Moreover, $S$ is now allowed to slightly violate the following property mentioned before: If  a path $P_v$ uses at least one edge of a blossom $B$, then all edges of $P_v \cap B$ form one subpath $P'$ of $P$ such that
 one endpoint of $P'$ is the base of $B$. In our case, the satisfied property is slightly modified as follows. If the base of $B$ is $v_i$, then one endpoint of $P'$  is  $v_i$ or the other copy $v_{3-i}$ of $v$. If the base of $B$ is $x^a_{ab}$ and $t=(a;b,c)$ is a triangle of type $1$, then one endpoint of $P'$  is  $x^a_{ab}$ or $x^a_{ac}$. Two examples are shown and explained in Figure \ref{strukturaS}.

\section{Deferred proofs} \label{deferred}

{\bf Proof of Lemma \ref{triangle1}}

\dowod
 A path $P$ non-amenable on $t$ can essentially have one of the following four forms: \\
$(i) P=(s, \ldots, b',b, a,d, \ldots, e,a,c,c'), (ii) P=(s, \ldots, b',b, a,d, \ldots, c',c, a,e),$ \\ 
$(iii) P=(s, \ldots, d,a,b,b' \ldots, c',c, a,e), (iv) P=(s, \ldots, d,a,b,b', \ldots, e,a,c,c')$. The other forms of a path non-amenable on $P$ are symmetric. We will show that if $h(u,v,w)$ is a currently examined segment, then in $S \cup h^+(u,v,w)$ only the first form of a path can occur.

Suppose first that among all vertices of $t$, $B(a)$ is processed  first.
When $B(a)$ is processed it may form a singleton blossom but it may also belong to a larger blossom (but without any edges of $t$).
Let us assume that  $(d,a) \in S \cap M$.  Clearly, we can then extend $S$ to edges $(a,b), (b,b')$ and to edges $(a,c), (c,c')$ along paths amenable on $t$ and amenable on any other triangle $t'$. (A path $Q=(s, \ldots, d,a,b,b')$ cannot be non-amenable on a triangle $t' \neq t$ of type $1$ or $0$, because $t'$ would have to then be of the form $(b;e,a)$ or $(x,b,a)$.  In both cases $Q$ would  have to contain the edge $(b,c)$, because a path non-amenable on a triangle $t'$ of type $0$ or $1$ has to  contain, respectively, $4$ and $6$ edges of $M \setminus t$ incident to $t$.  Since $B(a)$ is processed first, it means that $Q$ does not contain $(b,c)$.) 
The only case when we cannot extend $S$ along the edges $(a,c), (c,b)$ is when $e=c'$ and thus $G$ contains a triangle $(e;a,c)$ of type $2$.
If $e$ and $c'$ denote different vertices, then $t$ forms a blossom in $S$ and we can also extend $S$ along the edge $(a,e)$ - so that $e \in \mathcal{E}(S)$. Hence $t$ never becomes  vulnerable.

Suppose now that $e=c'$, see Figure \ref{2triangles1} (b). In this case we remove the hinge $h(a,c,b)$ but $S$ is extended to the edge $(b,c)$ via the hinge $h(a,b,c)$. We notice that the hinge $h(c,a,c')$ will never become vulnerable because  if it is examined at some later point, $S$ will contain  a blossom that includes the edges $(a,c)$ and $(c,c')$ and hence $h(c,a,c')$ will be safe.
This shows that in this case $t$ never becomes vulnerable (and that a path of form (iii) cannot exist in $S \cup h^+(u,v,w)$.

Suppose next that $b$ is processed first. It can happen  because $S$  contains a path $P'$ ending with $(b',b)$ or ending with $(c,b)$.
Let us consider first the case when $P'$ ends with $(b',b)$. Let us notice that $P'$ does not contain $(b,c)$ because then $c$ would be processed first or the other copy of $b$ would be processed first. 

\begin{claim}
A path $P_1=P' \cup \{b,a,d\}$ is amenable. So is a path $P_2=P' \cup \{b,a,e\}$
\end{claim}
\dowod
Suppose to the contrary that $P_1$ is non-amenable on $t'$. The triangle $t'$ cannot be of type $2$ (because then $t'$ would have to be equal to $t$ - a contradiction.) If $t'$ were to be of type $1$, then it would have the form (1) $t'=(a;b',b)$ or (2) $t'=(b;a,e)$. In  case (1) $P_1$ cannot be non-amenable on it because it goes through $(b',b) \in M \cap t'$. In case (2)  $P_1$ cannot be non-amenable because it does not contain $(b,c)$, see Figure \ref{2triangles1} (a).  The triangle  $t'$ cannot be a triangle of type $0$ either because $P'$ does not contain $(b,c)$ and a path non-amenable on a triangle $t'$ of type $0$ has to contain $6$ edges of $M$ incident to $t'$. The proof for the path $P_2$ is symmetric.
\koniec

This already shows that in this case no hinge incident to $a$ can be removed and $S \cup h^+(u,v,w)$ cannot contain a path of form (ii) or of form (iv).

Let us consider next the case when $P'$ ends with $(b,c)$, see Figure \ref{2triangles1} (b). 
 In this case
a path $P' \cup \{c,a,d\}$ can be non-amenable on a triangle $t''$ only if $t''$ has the form $(c';a,c)$ and is of type $2$.
Here the reasoning is similar as when $a$ is processed first  and $e=c'$. We remove the hinge $h(a,c,d)$ from $G_2$ and extend $S$ along $(c,a), (a,c')$. The hinge $h(a,c,c')$ will never become vulnerable, because at the moment it starts to be $S$-augmenting, $S$ contains a blossom with the edges $(b,c), (c,a), (a,c')$, which means that $h(a,c,c')$ is safe and hence not vulnerable and thus $t$ cannot become vulnerable. 

If $t$ is toppy, then we can essentially proceed as in the case when $B(a)$ is processed first, because as we have shown above either no hinge incident to $a$ has been removed (and then we can extend a path $P_a$ from $s$ to $a$ that does not go through any edge of $t$ to the edges $(a,b),(b,b')$ and $(a,c), (c,c')$) or $t$ is guaranteed not to be vulnerable. Note that in the case when a vulnerable triangle of type $1$ becomes toppy, a restored hinge $h$ may be contained in a path $P_h$ of $S$ that goes through some blossom in a non-standard way - see, for example Figure \ref{strukturaS} (a). A similar situation happens for some vulnerable triangles of type $2$. We write more about it in Section \ref{corr}.
\koniec

\begin{figure}[h]
\centering{\includegraphics[scale=0.8]{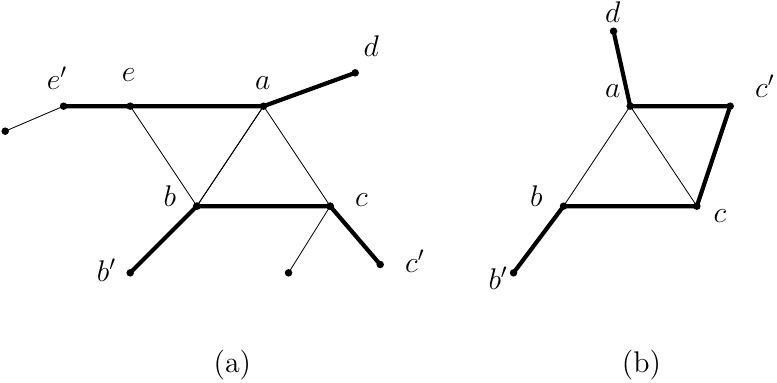}}
\caption{(a) Two triangles of type $1$ that share an edge connecting their top vertices. None of such triangles can become vulnerable. (b) A triangle $t_1=(a;b,c)$ of type $1$ sharing an edge with a triangle $t_2=(c';a,c)$ of type $2$.}
 \label{2triangles1}
\end{figure}

{\bf Proof of Lemma \ref{triangle0top}} \\
\dowod
If $t$ becomes vulnerable when we examine a hinge $h(u,v,w)$, then $S \cup h^+(u,v,w)$ must contain at this point a path $P_t$  of the form $(s, \ldots, f_1, e^t_1, f_2, \ldots,
f_3, e^t_2, f_4, \ldots, f_5, e^t_3, f_6)$, where  $e^t_1, e^t_2, e^t_3$ denote three edges of $t$, and $f_1, \ldots, f_6$ denote $6$ edges of $M'$ incident to $t$.

Suppose that an even length path $P$ of $S$ from $s$ to $a_1$ ends with $f_1=(d,a_1) \in M'$. When we process $a_1$, we check if the path $P$ from $s$ to $a_1$ can be extended with the edges $(a_1, u), (u,u')$, where $u$ is the vertex of $t$ and $(u,u')$ belongs to $M$. We argue that a path $P'=P \cup \{(a_1, u), (u,u')\}$ cannot be non-amenable on a triangle $t'$ of type $1$ or $0$. $P'$ cannot be non-amenable on a triangle $t'$ of type $0$ because either $t$ is the first triangle of type $0$ with a path in $S$ non-amenable on it, or the lemma currently being proved holds and then a path $P'$ can be non-amenable on a triangle $t'$ of type $0$ only if $S$ contains a blossom $B'$ containing all edges of $t$ and having base in $a_1$. Clearly, it is impossible for $S$ to contain such a blossom during the step when we examine a segment $\sigma=(a_1, u_i)$.

If $P'$ were to be non-amenable on $t'$ of type $1$, then by Lemma \ref{triangle1} the path $P'$ would have to go through a blossom incident to the top of $t'$. Hence, $S$ would have to contain a blossom $B'$ that contains two edges $f_i, f_j$ that are incident either to $a$ or to $c$. 
This would imply that $S$ contains a path that contains both $f_i$ and $f_j$ and none of the edges of $t$, which contradicts the possibility of the existence of $P_t$. This means that $P'$ can be non-amenable only on a triangle of type $2$ and therefore at most one of the paths $P_1=P \cup \{a,b,b'\}, P_2=P \cup \{a,b,b''\}, P_3= \cup \{a,c,c'\}, P_4=P \cup \{a,c,c''\}$ is non-amenable, where $(b,b'), (b,b''), (c,c'), (c,c''), (a,x)$ denote edges of $M$. 
Suppose it is $P \cup \{a,c,c''\}$ and  $t'=(x;a,c)$ is a triangle of type $2$.

Hence, when processing $a_1$, we extend $S$ so that it contains at least three of these paths or we form a blossom that contains  an  edge of $t$ incident to $a$.
In the second case we prove that $P_t$ cannot arise in $S$:

\begin{claim} \label{claim3}
If at some point $S$ contains a blossom $B$ with $i$  edges of $\{e^t_1, e^t_2\}$ and $1 \leq i \leq 2$, then $t$ is not vulnerable.
\end{claim}
\dowod
Suppose that $B$ contains an edge $e^t_1$ of $t$ and hence contains $f_1, e^t_1=(a_1,c), f_2$. $B$ then  does not contain the other edges $f_3, f_4$ of $M$ incident to $e^t_1$. 
Suppose that $f_3$ is incident to $a$ and $f_4$ to $c$. Then  the path  $P_2=(s, \ldots, f_2, e^t_1, f_3)$ is amenable and we can extend $S$.
The edges $f_1, e^t_1, f_2$ belong to a common blossom. Therefore, on any path of $S$ that contains all of them, they occur consecutively, in order $(f_1, e^t_1, f_2)$ 
or $(f_2, e^t_1, f_1)$. Since $S$ contains a path ending with $(f_2, e^t_1, f_3)$, thus a path of $S$ that contains all edges $f_1, f_2, f_3, e^t_1, e^t_2$ would have to have the form $(s, \ldots, f_1, e^t_1, f_2, \ldots, f_2, e^t_2)$. But we already know that $S$ contains a path $(s, \ldots, f_1, e^t_2)=P_1$ - a contradiction, because
$S$ may contain only one even-length path to any vertex and in particular to $x^a_{ab}$.
\koniec

If $S$ contains paths $P_1-P_3$, then in order for $S$ to contain $P_t$, $S$ has to contain a blossom $B'$ with the edges $(a,b)$ and $(a,c)$. (This is because $S$ has only one even-length path from $s$ to any vertex $u \in \mathcal{E}(S)$ and $S$  contains even length paths $P_1, P_2$ to $x^b_{ab}$ and $x^c_{ac}$, respectively, each of which contains exactly one of the edges $(x^a_{ab}, x^b_{ab}), (x^a_{ac}, x^c_{ac})$.) By Claim \ref{claim3}, $B'$ has to contain all edges of $t$. Let us also notice that such a blossom cannot have the form $B''=(a,b, \ldots, c,b, \ldots, c,a)$, i.e., after traversing the edge $(a,b)$ it cannot jump to the endpoint $c$ of the edge $(c,b)$ and then go through some edges and reach the endpoint $c$ of the edge $(c,a)$. This is because $S$
contains the paths $P_1$ and $P_2$ and hence on any path in $S$ the edges $(b,b'), (b,b'')$ can be traversed in the order $(b,b', \ldots, b'', b)$ or in the order opposite to it (again because $S$ has only one even-length path from $s$ to any vertex $u \in \mathcal{E}(S)$), i.e., before forming $B'$, a blossom $B(b)$ is formed that includes the edges $(b,b'), (b,b'')$. This blossom $B(b)$ has its base in $x^b_{ab}$.

Let us finally notice that if $S$ contains a blossom containing all edges of $t$ and $t$ shares an edge with a triangle $t'=(c';a,c)$ of type $2$, then $t$ is not vulnerable,
because the hinges $h(c,a,c')$ and $h(b,a,c')$ are safe, compare Figure \ref{zawiasywewfig}, in which the hinges $h(c,b,a), h(i,b,a)$ of $t=(b;c,i)$ are impassable but  not vulnerable.
\koniec

{\bf Proof of Lemma \ref{safe}.} \\

\dowod 

Let $t=(a;b,c)$ be a triangle of type $2$ such that $M' \cap t=\{(a_1, c_1), (a_2, b_1)\}$. Assume that the algorithm at some point discovers that it could form a blossom that consists only of the edge $(a_2, b_1)$. This can happen in two cases: (i) either $b_2$ and $a_1$ belong to the same blossom or (ii) the algorithm is examining a segment $\sigma=(a_1, v_i)$ such that $v_i \in B(b_1)$ and of course, $a_1 \notin B(b_1)$. Let us first deal with case (i).
Suppose also that $(b_2, d_2)$ is an edge of $M' \setminus t$.

Let us first deal with case (i). Suppose also that $(b_2, d_2)$ is an edge of $M' \setminus t$.
  It suffices to prove that no hinge incident to $b_2$ or  to $a_1$ is impassable or that even if it is impassable, its addition would not increase the alternating s-reachability of $S$. We show that no hinge incident to $b_2$ is impassable. Suppose to the contrary that some hinge $h(y,b,d)$ is impassable on a triangle $t''$.  By Observation \ref{zawiasy} we know that $t''$ cannot be a triangle of type $2$. If $t''$ were to be of type $1$, then it would have to have the form $t''=(y;a,b)$. However, since $b_2$ and $a_1$ belong to the same blossom, the hinge $h(y,b,d)$ is not $S$-augmenting and hence, cannot be impassable.
Suppose now that some hinge incident to $a_1$ is impassable, i.e., assume that some hinge $h(y,a,c)$ is impassable on a triangle $t''$. Then $t''$ cannot be of type $1$ because as we have observed above $h(y,a,c)$ cannot be $S$-augmenting because $b_2$ and $a_1$ belong to the same blossom.  If $t''=(b;a,d)$ is of type $2$, then $h(y,a,c)$ may be impassable on $t''$. In this case, however, we prove that no hinge incident to $d_2$ can be impassable on any triangle $t_3$. This follows from the following. By Observation \ref{zawiasy} $t_3$ cannot be of type $2$ and $t_3$ cannot be of type $1$ because $S$ does not contain a path $(s, \ldots, a,b)$ that does not go through $(d,z) \in M \setminus t''$.

The proof of case (ii) is  contained in the second part of the proof of case (i). Case (ii) may arise in situations shown in Figure \ref{safespecial}. If we first consider all edges except for $(d,a), (d,k), (k,b)$, then the hinge $h(c,b,a)$ is impassable and we do not add it to $S$. However, we form a blossom containing all edges of $t=(b,c,i)$ and edges $(c,a), (c,f), (a,f), (i,j), (i,g), (j,g)$. Further, we add edges $(d,k), (k,b)$ as well as the hinge $h(a,d,b)$ to $S$. At the point when we consider the hinge $h(d,a,c)$, we notice that it is impassable but safe, because its addition to $S$, would mean that we create a new blossom, whose only new edge of $M'$ would be $(d,b)$.

\begin{figure}
\centering{\includegraphics[scale=0.8]{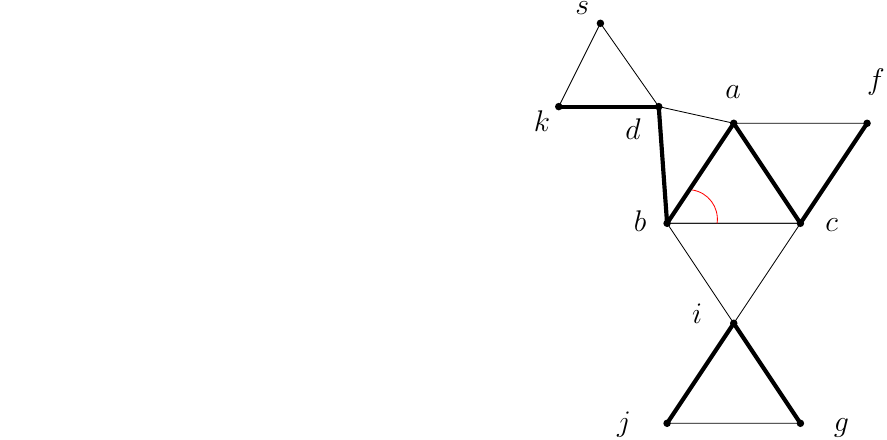}}
\caption{}
 \label{safespecial}
\end{figure}
\koniec

\section{Acknowledgements} I would like to thank Kristóf B{\'e}rczi for many interesting discussions. I also thank Mateusz Wasylkiewicz for comments on Section \ref{decomp} and anonymous reviewers for their helpful feedback.

\bibliographystyle{abbrv}
\bibliography{bib}


\end{document}